\newcommand{\noun}[1]{\textsc{#1}}
\newtheorem{theorem}{Theorem}
\newtheorem{definition}[theorem]{Definition}
\newtheorem{corollary}[theorem]{Corollary}
\newtheorem{lemma}[theorem]{Lemma}
\newtheorem{observation}[theorem]{Observation}
\newcommand{\opt}[1]{\ensuremath{S^{\star}_{#1}}\xspace}
\newcommand{\vopt}[1]{\ensuremath{f^{\star}_{#1}\xspace}}
\newcommand{\dopt}[1]{\ensuremath{\delta^{\star}_{#1}\xspace}}
\newcommand{\inc}[1]{\ensuremath{\vec{S}_{#1}\xspace}}
\newcommand{\vinc}[1]{\ensuremath{f(\inc{#1})\xspace}}
\newcommand{\set}[1]{ \left\{ #1 \right\} }
\let\phi\varphi
\newcommand{\far}{\alpha}
\newcommand{\ratio}{q}
\newcommand{\keytime}{t}
\newcommand{\function}{h}
\newcommand{\eps}{\varepsilon}
\newcommand{\N}{\mathbb{N}}
\newcommand{\eq}{\Longleftrightarrow}
\newcommand{\ama}{\[ \begin{aligned}}
\newcommand{\ema}{\end{aligned} \]}
\newcommand{\Greedy}{greedy\xspace}
\newcommand{\IMF}{\noun{Maximum Bridge-Flow}\xspace}
\newcommand{\setc}[2]{\left\{#1\ \right|\ \left.#2\right\}}
\newcommand{\Wlog}{{Without loss of generality}}
\newcommand{\noregions}{\ensuremath{N}\xspace}
\newcommand{\regionsin}{\ensuremath{\set{1,\dots,\noregions}}\xspace}
\newcommand{\densityr}[1]{\delta(#1)}
\newcommand{\vallb}[1]{v(#1)}
\newcommand{\maxpicks}{m}
\newcommand{\val}{\ensuremath{\text{\sc val}}\xspace}
\newcommand{\flow}{f}
\newcommand{\sm}{\setminus}
\newcommand{\fnbr}{f^{nb}_r}
\newcommand{\profit}[1]{p_{#1}} 
\definecolor{orange}{RGB}{255,127,0}
\newcommand{\boldall}[1]{\ifmmode\mathbf{#1}\else\textbf{\boldmath{#1}}\fi}
\begin{document}

\title{General Bounds for Incremental Maximization}

\author{Aaron Bernstein$^1$, Yann Disser$^2$\thanks{Supported by the `Excellence Initiative' of the German Federal and State Governments and the Graduate School~CE at TU~Darmstadt.}\,, and Martin Gro\ss$^3$\thanks{Supported by the German Research Foundation (DFG) within project A07 of CRC TRR 154.}\\
\small{$^1$ TU Berlin, Germany, \texttt{bernstei@gmail.com}}\\
\small{$^2$ TU Darmstadt, Germany, \texttt{disser@mathematik.tu-darmstadt.de}}\\
\small{$^3$ University of Waterloo, Canada, \texttt{mgrob@uwaterloo.ca}}}





\newcommand{\solelements}{U\xspace} 
\newcommand{\weaksub}{augmentable\xspace} 

\maketitle

\begin{abstract}
  We propose a theoretical framework to capture incremental solutions to cardinality constrained maximization problems.
  The defining characteristic of our framework is that the cardinality/support of the solution is bounded by a value~$k\in\mathbb{N}$ that grows over time, and we allow the solution to be extended one element at a time.
  We investigate the best-possible competitive ratio of such an incremental solution, i.e., the worst ratio over all~$k$ between the incremental solution after~$k$ steps and an optimum solution of cardinality~$k$.
  We define a large class of problems that contains many important cardinality constrained maximization problems like maximum matching, knapsack, and packing/covering problems.
  We provide a general~$2.618$-competitive incremental algorithm for this class of problems, and show that no algorithm can have competitive ratio below~$2.18$ in general.
  
  In the second part of the paper, we focus on the inherently incremental greedy algorithm that increases the objective value as much as possible in each step.
  This algorithm is known to be $1.58$-competitive for submodular objective functions, but it has unbounded competitive ratio for the class of incremental problems mentioned above.
  We define a relaxed submodularity condition for the objective function, capturing problems like maximum (weighted) ($b$-)matching and a variant of the maximum flow problem.
  We show that the greedy algorithm has competitive ratio (exactly)~$2.313$ for the class of problems that satisfy this relaxed submodularity condition.
  
  Note that our upper bounds on the competitive ratios translate to approximation ratios for the underlying cardinality constrained problems.
\end{abstract}

\section{Introduction}
\global\long\def\OPT{\textsc{Opt}}
Practical solutions to optimization problems are often inherently
\emph{incremental} in the sense that they evolve historically instead
of being established in a one-shot fashion. This is especially true
when solutions are expensive and need time and repeated investments
to be implemented, for example when optimizing the layout of logistics
and other infrastructures. In this paper, we propose a theoretical
framework to capture \emph{incremental maximization} problems in some
generality. 

We describe an incremental problem by a set $\solelements$ containing
the possible elements of a solution, and an objective function $f\colon2^{\solelements}\to\mathbb{R}^{+}$
that assigns to each solution $S\subseteq\solelements$ some non-negative
value $f(S)$. We consider problems of the form

\begin{align}
\max &\; f(S)     \label{eq:problem} \\
\text{s.t.} &\; \left|S\right|\leq k \nonumber\\
 &\; S\subseteq\solelements, \nonumber
\end{align}

where $k\in\mathbb{N}$ grows over time.
 
An incremental solution $\vec{S}$ is given by an order $\{s_{1},s_{2},\dots\}:=\solelements$
in which the elements of $\solelements$ are to be added to the solution
over time. A good incremental solution needs to provide a good solution
after $k$ steps, for every $k$, compared to an optimum solution
$\opt{k}$ with $k$ elements, where we let $\opt{k}\in\arg\max_{S\subseteq\solelements,\left|S\right|=k}f(S)$ and $\vopt{k} := f(\opt{k})$.
Formally, we measure the quality of an incremental solution by its
\emph{competitive ratio.} 
For $\vec{S}_{k}:=\{s_{1},\dots,s_{k}\}\subseteq\solelements$ being the first~$k$ elements of~$\vec{S}$, 
we say that $\vec{S}$ is (strictly) $\rho$-competitive if
\[
\max_{k\in\{1,\dots,\left|\solelements\right|\}}\frac{\vopt{k}}{f(\vec{S}_{k})}\leq\rho.
\]
An algorithm is called $\rho$-competitive if it always produces a
$\rho$-competitive solution, and its \emph{competitive ratio} is
the infimum over all $\rho\geq1$ such that it is $\rho$-competitive.
Notice that we do not require the algorithm to run in polynomial time.

While all cardinality constrained optimization problems can be viewed
in an incremental setting, clearly not all such problems admit good
incremental solutions. For example, consider a cardinality constrained
formulation of the classical maximum $s$-$t$-flow problem: For a
given graph $G=(V,E)$, two vertices $s,t\in V$ and capacities $u\colon E\to\mathbb{R}^{+}$,
we ask for a subset $E'\subseteq E$ of cardinality $k\in\mathbb{N}$
such that the maximum flow in the subgraph $(V,E')$ is maximized.
The example in Figure~\ref{fig:incremental_flow} shows that we cannot
hope for an incremental solution that is simultaneously close to optimal
for cardinalities~1 and~2.
\begin{figure}
\begin{centering}
\begin{tikzpicture}[->]
\tikzstyle{every node} = [circle, fill = black, minimum size = 5, inner sep = 0]
\tikzset{above/.style = { label = {[label distance = 2]90:#1} } }
\tikzset{below/.style = { label = {[label distance = 2]270:#1} } }
\tikzset{>={Stealth[scale=1.2]}}
\node[label = {[label distance = 2]180:$s$}] (s) at (0, 0) {};
\node[label = {[label distance = 2]0:$t$}] (t) at (4, 0) {};
\node (v) at (2, 1) {};
\draw[thick] (s) edge node[midway, fill=white] {$1$} (v);
\draw[thick] (v) edge node[midway, fill=white] {$1$} (t);
\draw (s) edge[bend right] node[midway, fill=white] {$\varepsilon$} (t);
\end{tikzpicture}
\par\end{centering}
\caption{Example showing that the $s$-$t$-flow problem does not always admit
good incremental solutions, where $\varepsilon>0$ is arbitrarily
small. \label{fig:incremental_flow}}
\end{figure}

In order to derive general bounds on the competitive ratio of incremental problems, we need to restrict the class of objective functions~$f$ that we consider.
Intuitively, the unbounded competitive ratio in the flow example comes from the fact that we have to invest in the $s$-$t$-path of capacity~$1$ as soon as possible, but this path only yields its payoff once it is completed after two steps.

In order to prevent this and similar behaviors, we require $f$ to be monotone (i.e., $f(S)\leq f(T)$ if $S\subseteq T$) and sub-additive (i.e., $f(S)+f(T)\geq f(S\cup T)$).
Many important optimization problems satisfy these weak conditions, and we give a short list of examples below.
We will see that all these (and many more) problems admit incremental solutions with a bounded competitive ratio.
More specifically, we develop a general $2.618$-competitive incremental algorithm that can be applied to a broad class of problems, including all problems mentioned below.
We illustrate in detail how to apply our model to obtain an incremental variant of the matching problem, and then list incremental versions of other important problems that are obtained analogously.

\begin{itemize}
\item 
\underline{\noun{Maximum Weighted Matching}}: 
Consider a graph $G = (V,E)$ with edge weights $w\colon E \to \mathbb{R}_{\ge 0}$. 
If we think of edges as \emph{potential} connections and edge weights as \emph{potential} payoffs, then it is not enough to find the final matching because we cannot construct the edges all at once: the goal is to find a sequence of edges that achieves a high pay-off in the short, the medium, and the long term. In terms of our formal framework, we add edges to a set $S$ one at a time with $\solelements=E$ and $f(S)$ is the maximum weight of a matching $M \subseteq S$.
In order to be $\rho$-competitive, we need that, after $k$ steps for every~$k$, our solution~$S$ of cardinality~$k$ is no worse than a factor of~$\rho$ away from the optimum solution of cardinality $k$, i.e., $f(S) \geq f(S^{\star}_k)/\rho$.

This model captures the setting where the infrastructure (e.g. the matching, the knapsack, the covering, or the flow) must be built up over time.
The online model would be too restrictive in this setting because here we know our options in advance. 
Note that, as we add more edges, the set of edges~$S$ only needs to contain a large matching~$M$, but does not have to be a matching itself;
The matching~$M$ can change to an arbitrary subset of~$S$ from one cardinality to the next and does not have to stay consistent.
This ensures that $f(S)$ is monotonically increasing, and is in keeping with the infrastructures setting where the potential regret present in the online model does not apply: building more infrastructure can only help, since once it is built, we can change how it is used. 
Accordingly, in all the problems below the set $S$ does not have to be a valid solution to the cardinality constrained problem at hand, but rather needs to \emph{contain} a good solution as a subset. 
The objective~$f(S)$ is consistently defined to be the value of the best solution that is a subset of~$S$.
Notice that this approach can easily be generalized to \noun{Maximum $b$-Matching}.

\item
\underline{\noun{Set Packing}}: 
Given a set of weighted sets $\mathcal{X}$ we ask for an incremental subset $S\subseteq\mathcal{X}$ where $f(S)$ is the maximum weight of mutually disjoint subsets in $S$.
This problem captures many well-known problems such as \noun{Maximum Hypergraph Matching} and \noun{Maximum  Independent Set}.

\item
\underline{\noun{Maximum Coverage}}: 
Given a set of weighted sets $\mathcal{X}\subseteq 2^{U}$ over an universe of elements $U$, we ask for an incremental subset $S\subseteq\mathcal{X}$,
where $f(S)$ is the weight of elements in~$\bigcup_{X \in S} X$.
This problem  captures maximization versions of clustering and location problems.
We can include opening costs $c \colon \mathcal{X} \to \mathbb R_{\ge 0}$ by letting~$f(S)$ be the maximum over all subsets~$S'\subseteq S$ of the number (or weight) of the sets in~$S'$ minus their opening costs.

\item
\underline{\noun{Knapsack}}: 
Given a set $X$ of items, associated sizes $s\colon X\to\mathbb{R}_{\ge 0}$ and values $v\colon X\to\mathbb{R}_{\ge 0}$, and a knapsack of capacity $1$, 
we ask for an incremental subset $S\subseteq X$, where $f(S)$ is the largest value $\sum_{x\in S'}v(x)$ of any subset $S'\subseteq S$ with $\sum_{x\in S'}s(x)\leq1$.
This problem can be generalized to \noun{Multi-Dimensional Knapsack} by letting item sizes be vectors and letting the knapsack have a capacity in every dimension.

\item
\underline{\noun{Disjoint Paths}}:
Given a graph $G = (V,E)$, a set of pairs $\mathcal{X} \subseteq V^2$ with weights~$w\colon \mathcal{X} \to \mathbb{R}_{\ge 0}$, we ask for an incremental subset $S \subseteq \mathcal{X}$, where $f(S)$ is the maximum weight of a subset $S'\subseteq S$, such that $G$ contains mutually disjoint paths between every pair in~$S'$. 

\item
\underline{\noun{Maximum Bridge-Flow}}: 
We argued above that the maximum $s$-$t$-flow problem is not 
amenable
 to the incremental setting because it does not pay off to build paths partially. 

To overcome this, we consider a natural restriction of the flow problem where most edges are freely available to be used, and only the edges of a directed $s$-$t$-cut need to be built incrementally.
If the directed cut has no backward edges, every $s$-$t$-path contains exactly one edge that needs to be built, and we never have to invest multiple steps to establish a single path.
This problem captures logistical problems where links need to be established between two clusters, like when bridges need to be built across a river, cables across an ocean, or when warehouses need to be opened in a supplier-warehouse-consumer network.
Formally, given a directed graph $G=(V,E)$ with capacities $u\colon E\to\mathbb{R}$, vertices $s,t\in V$, and a directed $s$-$t$-cut $C\subseteq E$ induced by the partition $(U,W)$ of $V$ such that the directed cut induced by $(W,U)$ is empty, 
we ask for an incremental subset $S\subseteq C$ where $f(S)$ is the value of a maximum flow in the subgraph $(V,E\setminus(C\setminus S))$.

\end{itemize}

It is easy to verify that all the problems mentioned above (and many more) indeed have a monotone
and sub-additive objective function.
In addition, each one of these problems satisfies the following property: 
For every $S\subseteq\solelements$, there exists $s\in S$ with $f(S\setminus\{s\})\geq f(S)-f(S)/\left|S\right|$. We call this property the \emph{accountability} property -- to our knowledge, it has not been named before.
Intuitively, this property ensures that the value of a set $S\subseteq\solelements$ is the sum of individual contributions of its elements, and there cannot be additional value that emerges only when certain elements of $\solelements$ combine. 
While it is easy to formulate artificial problems that have monotonicity and sub-additivity but no accountability, 
we were not able to identify any natural problems of this kind. 
This justifies to add accountability to the list of properties that we require of incremental problems. 

\begin{definition} 
Given a set of elements $\solelements$, and a function $f \colon 2^{\solelements} \to \mathbb{R}$,
we say that the function $f$ is \emph{incremental} if it satisfies the following properties for every~$S,T \subseteq \solelements$:
\begin{enumerate}
\item (monotonicity): $S\subseteq T\Rightarrow f(S)\leq f(T)$,
\item (sub-additivity): $f(S)+f(T)\geq f(S\cup T)$,
\item (accountability): $\exists s\in S\colon f(S\setminus\{s\})\geq f(S)-f(S)/\left|S\right|$.
\end{enumerate}
We say that a cardinality constrained problem with increasing cardinality (eq.~\eqref{eq:problem}) is \emph{incremental} if its objective function is incremental.
\end{definition}

Observe that a $\rho$-competitive incremental algorithm immediately yields a $\rho$-approximation algorithm for the underlying cardinality constrained problem, with the caveat that the resulting approximation algorithm might not be efficient since we make no demands on the runtime of the incremental algorithm.
The converse is rarely the case since approximation algorithms usually do not construct their solution in incremental fashion.
A prominent exception are \emph{greedy} algorithms that are inherently incremental in the sense that they pick elements one-by-one such that each pick increases the objective by the maximum amount possible.
This type of a greedy algorithm has been studied as an approximation algorithm for many cardinality constrained problems, and approximation ratios translate immediately to competitive ratios for the incremental version of the corresponding problem.
In particular, the greedy algorithm is known to have competitive ratio (exactly) $\frac{e}{e-1}\approx1.58$
if the objective function $f$ is monotone and submodular~\cite{NemhauserWolseyFisher/78}.
Note, however, that of all the incremental problems listed above, only \noun{Maximum Coverage} (without opening costs) has a submodular objective function.
It is also known that if we relax the submodularity requirement and allow~$f$ to be the minimum of two monotone (sub-)modular functions, the greedy algorithm can be arbitrarily bad~\cite{KrauseMcMahanGuestrinGupta/08}.
We provide a different relaxation of submodularity that captures \noun{Maximum (Weighted) ($b$-)Matching} and \noun{Maximum Bridge-Flow}, and where the greedy algorithm has a bounded competitive/approximation ratio.

\subparagraph*{Our Results.}

As our first result, we show that every incremental problem admits a bounded competitive ratio. 

\begin{theorem}
\label{thm:main-incremental}
Every incremental problem admits a $(1+\phi)$-competitive algorithm, where $\phi$ is the golden
ratio and $(1+\phi)\approx2.618$. 
No general deterministic algorithm for this class of problems has a competitive ratio of $2.18$ or better.
\end{theorem}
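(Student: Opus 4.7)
The plan is to prove the upper and lower halves of Theorem~\ref{thm:main-incremental} separately: a $(1+\phi)$-competitive algorithm via a ``chasing'' strategy built on iterated accountability, and the $2.18$ lower bound via an adversarial family of instances parameterized by the algorithm's committed ordering.

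For the upper bound, the starting observation is that accountability can be iterated: for any $S \subseteq \solelements$ with $|S| = n$, one can reorder its elements as $s_1,\ldots,s_n$ such that every prefix $\{s_1,\ldots,s_\ell\}$ has $f$-value at least $(\ell/n) f(S)$, by repeatedly peeling off the element whose removal loses at most $f(S)/|S|$. I would design the algorithm in phases $i \geq 0$ with geometric targets $V_i := \phi^i \vopt{1}$, letting $n_i := \min\{n : \vopt{n} \geq V_i\}$. In phase $i$, the algorithm chases $\opt{n_i}$ by appending its elements in the accountability order (skipping those already present); once the chase completes, monotonicity gives $f(\vec{S}) \geq V_i$.

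To bound the competitive ratio, I would fix a step $k$ inside phase $i$ after $\ell$ fresh elements have been appended in this phase. Monotonicity combined with the accountability ordering gives $f(\inc{k}) \geq \max\bigl(V_{i-1},\,(\ell/n_i) V_i\bigr)$, while the position of $k$ relative to $n_{i+1}$ controls $\vopt{k}$: if $k < n_{i+1}$ then $\vopt{k} < V_{i+1} = \phi^2 V_{i-1}$, so the ratio is at most $\phi^2 = 1+\phi$ directly; otherwise $\ell$ is large enough that the progress term $(\ell/n_i) V_i$ dominates and absorbs the extra growth in $\vopt{k}$. The defining identity $\phi^2 = \phi + 1$ is exactly what synchronizes the two regimes at the target ratio.

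For the lower bound, I would construct an instance whose objective supports several disjoint candidate optima of equal value at each of a sequence of scales. Knowing the algorithm's committed ordering, the adversary reveals at each scale which candidate is allowed to grow next, based on the elements the algorithm has so far neglected; iterating across scales and optimizing the sizes and values pins the ratio down to $2.18$. The main obstacle in the upper bound will be the ``overrun'' case where phase $i$ completes only after the cardinality has already passed $n_{i+1}$, so that $\vopt{k}$ may already have reached $V_{i+1}$; handling it cleanly appears to require either adapting the algorithm to skip or collapse phases, or a sharper use of the progress term. The main obstacle in the lower bound is pinning down the specific constant, since a single-scale construction cannot exceed $2$, so the family must be multi-scale while simultaneously respecting monotonicity, sub-additivity, and accountability throughout.
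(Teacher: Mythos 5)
The accountability-ordering observation is correct and is exactly the paper's Lemma~2 (``greedy order''). You are also right that the golden-ratio identity $\phi^2=\phi+1$ is what makes the two regimes close. But the phase parameterization you propose is wrong, and the ``overrun'' issue you flag is not a loose end -- it is fatal to the scheme as written.

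Concretely: you set value targets $V_i=\phi^i\,\vopt{1}$ and chase $\opt{n_i}$ with $n_i=\min\{n:\vopt{n}\geq V_i\}$. On a $\beta$-decreasing \noun{Region Choosing} instance with $\beta$ close to $1$ (so $\vopt{n}=n^{\beta}\approx n$, and the optima $\opt{n_i}$ are pairwise disjoint regions), we get $n_i\approx\phi^i$ and total steps through phase $i-1$ equal to $T_{i-1}=\sum_{j<i}n_j\approx\phi^{i+1}$. Thus when phase $i$ begins, the current cardinality $T_{i-1}$ already \emph{exceeds} $n_i$: you are chasing an optimum that is smaller than what the adversary already holds, and $k\geq n_{i+1}$ throughout the phase. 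Since the regions are disjoint, your lower bound $f(\inc{k})\geq\max\bigl(V_{i-1},(\ell/n_i)V_i\bigr)$ is tight, and at $\ell\approx n_i/\phi$ the ratio evaluates to $(\phi^{i+1}+\phi^{i-1})/\phi^{i-1}=\phi^2+1\approx3.618$. So the scheme is not $(1+\phi)$-competitive. The fix is not cosmetic: the per-phase growth factor needs to be $1+\phi=\phi^2$, not $\phi$. The paper grows the chased \emph{cardinality} geometrically, $k_i=\lceil(1+\phi)k_{i-1}\rceil$, and its Lemma~4 establishes $t_{i-1}\leq\phi k_{i-1}<k_i$, guaranteeing you are always chasing strictly ahead of the current step index. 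You also omit the decreasing-density lemma ($\dopt{k'}\geq\dopt{k}$ for $k'\leq k$, another consequence of accountability), which the paper needs in the overrun case to bound $\vopt{k}\leq k\,\dopt{k_i}$ and reduce the argument to the purely combinatorial inequality $k\leq(1+\phi)(k-t_{i-1})$; without it you have no upper bound on $\vopt{k}$ once $k\geq n_{i+1}$.

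On the lower bound, your sketch is too vague to check and is framed as an adaptive adversary revealing candidates as the algorithm commits. The paper's construction is a \emph{fixed} instance -- \noun{Region Choosing} with $N$ disjoint regions where region $R_i$ has $i$ elements of density $i^{\beta-1}$ -- and the only adversarial freedom is the choice of the stopping cardinality $k$. The argument then shows that for a suitable $\beta$ (with the ``problematic pair'' condition), the gap indices $\far_i$ of any incremental solution grow by a fixed $\eps>0$ per phase and must eventually violate $\rho$-competitiveness; plugging in $(\rho,\beta)=(2.18,0.86)$ gives the bound. Your intuition that a multi-scale family is necessary and that a single scale caps out at $2$ is sound, but the proposal does not supply the construction or the mechanism that forces divergence.
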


Again, note that we make no guarantees regarding the running time of our incremental algorithm.
In fact, our algorithm relies on the ability to compute the optimum of the underlying cardinality constrained problem for increasing cardinalities.
If we can provide an efficient approximation of this optimum, we get an efficient incremental algorithm in the following sense.

\begin{corollary}
    \label{cor:polynomial_algorithm}
If there is a polynomial time $\alpha$-approximation algorithm for a cardinality constrained problem with incremental objective function, then we can design a polynomial time $\alpha (1+\phi)$-competitive incremental algorithm.
\end{corollary}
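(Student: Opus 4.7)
The plan is to show that the $(1+\phi)$-competitive algorithm of Theorem~\ref{thm:main-incremental} degrades gracefully when the exact optima it consults are replaced by $\alpha$-approximate solutions, and that after this replacement it runs in polynomial time while losing only an additional factor of~$\alpha$ in the competitive ratio.

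Concretely, whenever the algorithm of Theorem~\ref{thm:main-incremental} requires access to an optimum set $\opt{k}$ (or its value $\vopt{k}$) for some cardinality~$k$, I would instead invoke the polynomial-time $\alpha$-approximation algorithm, obtaining a set $\tilde{S}_k \subseteq \solelements$ of cardinality at most~$k$ with $f(\tilde{S}_k) \geq \vopt{k}/\alpha$. I would then re-run the competitive analysis of Theorem~\ref{thm:main-incremental} with every occurrence of $\opt{k}$ replaced by $\tilde{S}_k$. The original argument presumably only exploits that the sets it consults are subsets of $\solelements$, have cardinality at most~$k$, and are eventually fully contained in $\inc{\cdot}$; all of these properties continue to hold for $\tilde{S}_k$, so the analysis should yield a variant of the original lower bound, namely $\vinc{k} \geq f(\tilde{S}_{k'})/(1+\phi)$ for some key cardinality $k' \geq k$ that the algorithm visits.

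Combining this with $f(\tilde{S}_{k'}) \geq \vopt{k'}/\alpha \geq \vopt{k}/\alpha$, where the second inequality uses monotonicity of the optimum value, gives the claimed bound $\vinc{k} \geq \vopt{k}/(\alpha(1+\phi))$ for every~$k$. For the runtime, the algorithm of Theorem~\ref{thm:main-incremental} produces an ordering of the at most~$|\solelements|$ elements of~$\solelements$, so it can make only polynomially many ``decision points'' at which it fetches a surrogate solution. Each such fetch is now answered in polynomial time by assumption, so the overall algorithm is polynomial.

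The main obstacle is that the algorithm of Theorem~\ref{thm:main-incremental} may base certain internal decisions---for instance, the choice of cardinalities at which to fetch a new optimum, or a comparison between the current value $\vinc{k}$ and the threshold $\vopt{k}/\rho$ for some target ratio~$\rho$---on the exact value $\vopt{k}$. In that case I would replace these tests by the analogous tests phrased in terms of the approximate value $f(\tilde{S}_k)$ and argue that the resulting multiplicative slack is absorbed by the factor~$\alpha$ in the final bound. Verifying this cleanly requires inspecting the precise structure of the algorithm underlying Theorem~\ref{thm:main-incremental}, and is the only non-mechanical part of the argument; everything else is a routine transcription.
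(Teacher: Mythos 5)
Your core idea -- replace each query to an exact optimum $\opt{k_i}$ by a call to the polynomial-time $\alpha$-approximation and re-run the analysis -- is exactly the paper's (one-line) proof, which states "Corollary~\ref{cor:polynomial_algorithm} follows if we replace~$\opt{k_i}$ by an $\alpha$-approximate solution for cardinality~$k_i$." So the approach matches.

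Two small remarks on your write-up. First, the "main obstacle" you flag at the end is a non-issue: the algorithm of Section~\ref{sec:inc_algo} chooses its phase cardinalities $k_i = \lceil(1+\phi)k_{i-1}\rceil$ entirely in advance, independently of any observed objective values, so there are no value-dependent internal decisions that could be corrupted by using approximate values. Second, the way you propose to combine the bounds is slightly off: you claim the analysis gives $\vinc{k}\geq f(\tilde S_{k'})/(1+\phi)$ for some \emph{$k'\geq k$} and then appeal to monotonicity of $\vopt{\cdot}$. In fact the key cardinality consulted in both cases of the proof is $\leq k$ (it is $k_{i-1}$ in Case~1 and $k_i$ in Case~2), and the comparison to $\vopt{k}$ goes through Lemma~\ref{lem:decreasing_density} (decreasing density), not monotonicity. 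Concretely, in Case~1 one uses $\vopt{k}<(1+\phi)\vopt{k_{i-1}}\leq(1+\phi)\alpha\,f(\tilde S_{k_{i-1}})\leq(1+\phi)\alpha\,\vinc{k}$; in Case~2 one replaces $\dopt{k_i}$ by $f(\tilde S_{k_i})/|\tilde S_{k_i}|\geq\dopt{k_i}/\alpha$ and then proceeds as in the original proof, noting that Lemma~\ref{lem:opt_greedy_order} applies to \emph{any} set (not just an optimum) by accountability. With those corrections your argument is complete and is the same as the paper's.
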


We also analyze the approximation/competitive ratio of the greedy algorithm.
We observe that for many incremental problems like \noun{Knapsack}, \noun{Maximum Independent Set}, and \noun{Disjoint Paths}, the greedy algorithm has an unbounded competitive ratio. 
On the other hand, we define a relaxation of submodularity called the $\alpha$-\emph{\weaksub} property under which the greedy algorithm has a bounded competitive ratio.
In particular, this relaxation captures our cardinality constrained versions of \noun{Maximum (Weighted) ($b$-)Matching} and \noun{Maximum Bridge-Flow}, where the incremental set~$S$ need not be feasible but only contain a good feasible subset.
We get the following result, where the tight lower bound for $\alpha=2$ is obtained for \noun{Maximum Bridge-Flow}. Notice that for $\alpha=1$, we obtain the $\frac{e}{e-1} \approx 1.58$ bound that is known for submodular functions. For $\alpha=2$, the bound is $\frac{2 e^2}{e^2-1} \approx 2.313$.

\begin{theorem}
    For every cardinality constrained problem with an $\alpha$-\weaksub objective (defined below), the greedy algorithm has approximation/competitive ratio \mbox{$\alpha\frac{ e^\alpha}{e^{\alpha}-1}$}.
    This bound is tight for the greedy algorithm on problems with 2-\weaksub objectives, which includes \noun{Maximum (Weighted) ($b$-)Matching} and \noun{Maximum Bridge-Flow}.
    \label{thm:greedy}
\end{theorem}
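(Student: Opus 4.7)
For the upper bound I would generalize the classical Nemhauser--Wolsey--Fisher argument for monotone submodular functions. The first step is to unpack the $\alpha$-\weaksub property to derive, for the greedy prefix $\inc{i}$ and the optimum $\opt{k}$ of cardinality $k$, the key inequality
\[
f(\inc{i+1}) - f(\inc{i}) \;\geq\; \frac{f(\opt{k}) - \alpha \cdot f(\inc{i})}{k}.
\]
The reasoning is that greedy always picks the element of maximum marginal value, while $\alpha$-\weaksub-ness guarantees that some element of $\opt{k}$ has marginal gain into $\inc{i}$ at least a $1/k$ fraction of $f(\opt{k}) - \alpha f(\inc{i})$; once this inequality is set up, the rest of the upper bound is mechanical.

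Next, I would define the gap $g_i := f(\opt{k}) - \alpha \cdot f(\inc{i})$ and rewrite the above as $g_{i+1} \leq (1 - \alpha/k)\, g_i$. Iterating from $g_0 = f(\opt{k})$ gives $g_k \leq (1 - \alpha/k)^k f(\opt{k}) \leq e^{-\alpha} f(\opt{k})$, and rearranging yields $f(\inc{k}) \geq \frac{1-e^{-\alpha}}{\alpha} f(\opt{k})$, i.e.\ the competitive ratio $\alpha\, e^\alpha/(e^\alpha-1)$. A minor care point is that the bound $(1-\alpha/k)^k \leq e^{-\alpha}$ is only valid for $\alpha \leq k$; for small $k$ one argues the bound directly (or the bound only worsens and the limit $k\to\infty$ is what we actually need for tightness).

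For the matching lower bound at $\alpha=2$, I would construct a family of \IMF{} instances parameterized by $k$ in which, at every step $i$, greedy's locally best edge contributes exactly the minimum marginal gain that $2$-\weaksub-ness permits given $\inc{i-1}$ and the fixed optimum. The template I have in mind places $k$ "optimum" cut edges that realize $f(\opt{k})$, together with $k$ "greedy-tempting" cut edges whose individual capacities are tuned so that selecting them repeatedly produces the geometric sequence $f(\inc{i}) - f(\inc{i-1}) = \frac{1}{2}(f(\opt{k}) - 2 f(\inc{i-1}))/k$. Taking $k\to\infty$ drives the greedy/optimum ratio to $2/(1-e^{-2}) = 2e^2/(e^2-1)$, matching the upper bound. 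Showing that the same construction transfers to \noun{Maximum (Weighted) ($b$-)Matching} requires only translating the cut structure into parallel bundles of edges of suitable weights.

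The hard part, I expect, is this lower-bound construction: one must arrange the Bridge-Flow network so that the blocking effect between the greedy-tempting edges and the optimum edges realizes the full factor-$2$ slack at \emph{every} step simultaneously, and so that the resulting $f$ really is $2$-\weaksub (not merely submodular up to a factor). Everything else---the recurrence, the bound on $(1-\alpha/k)^k$, and the monotone/sub-additive verifications---is straightforward.
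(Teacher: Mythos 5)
Your proposal takes essentially the paper's approach. For the upper bound, the key inequality you derive from $\alpha$-augmentability, $f(\inc{i+1}) - f(\inc{i}) \geq (\vopt{k} - \alpha\vinc{i})/k$, is exactly the one the paper uses (apply the condition with $S=\inc{i}$, $T=\opt{k}$, then monotonicity of $f$). You then track the gap $g_i := \vopt{k} - \alpha\vinc{i}$ and iterate $g_{i+1} \leq (1-\alpha/k)g_i$, whereas the paper tracks the marginal gains $p_i := \vinc{i} - \vinc{i-1}$ and derives the equivalent recurrence $(k-\alpha)p_i \geq \beta + \alpha\sum_{j>i}p_j$ with $\beta := g_k$; since $\beta + \alpha\sum_{j>i}p_j = g_i$, these are two bookkeepings of the same computation and yield the identical closed-form bound $\vopt{k}/\vinc{k} \leq \alpha/(1-(1-\alpha/k)^k)$. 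Your version is a bit cleaner, but do spell out the $k \leq \alpha$ case rather than waving it away: the paper dispatches it via sub-additivity ($\vopt{k} \leq k\vopt{1} \leq \alpha\vinc{k}$), which is the right argument when $(1-\alpha/k)$ becomes non-positive.

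For the lower bound your plan matches the spirit of the paper's \IMF construction (geometrically-tuned cut-edge capacities luring greedy into a suboptimal chain), but the target marginal gain you write down has a spurious factor of $\tfrac{1}{2}$: tightness of $2$-augmentability means forcing greedy's gain to \emph{equal} $(\vopt{k}-2\vinc{i-1})/k$, not half of it. With your halved formula the recurrence gives $g_k \to e^{-1}\vopt{k}$ and hence ratio $2e/(e-1)$, not the claimed $2e^2/(e^2-1)$; worse, an objective realizing that halved gain would actually violate $2$-augmentability. Drop the $\tfrac{1}{2}$ and your target recurrence is consistent with the stated limit. You are also right that the $2$-augmentability of the \IMF objective needs to be established separately; the paper does this as a standalone lemma via a flow-decomposition argument in the residual graph, which is itself nontrivial.
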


We emphasize that the families of instances we construct to obtain the lower bounds in Theorems~\ref{thm:main-incremental} and \ref{thm:greedy} require the number of elements to tend to infinity, since it takes time for incremental solutions to sufficiently fall behind the optimum solution.

\subparagraph*{Related Work}
Most work on incremental settings has focused on cardinality constrained \emph{minimization} problems. 
A prominent exception is the robust matching problem, introduced by Hassin and Rubinstein~\cite{HassinRubinstein2002}.
This problem asks for a weighted matching $M$ with the property that, for every value $k$, the total weight of the $\min(k,|M|)$ heaviest edges of $M$ comes close to the weight of a maximum weight matching of cardinality $k$. 
Note that this differs from our definition of incremental matchings in that the robust matching problem demands
that the ``incremental'' solution consists of a matching, while we allow any edge set that contains a heavy matching as a subset.
Since their model is more strict, all of the following competitive ratios carry over to our setting.
Hassin and Rubinstein~\cite{HassinRubinstein2002} gave an improved, deterministic algorithm that achieves competitive ratio $\sqrt{2}\approx1.414$. They also give a tight example for the $\sqrt{2}$ ratio, which also works in our incremental setting.
Fujita et al.~\cite{FujitaKobayashiMakino2013} extended this result to matroid intersection, and Kakimura and Makino~\cite{KakimuraMakino2013} showed that every independence system allows for a $\sqrt{\mu}$-competitive solution, with $\mu$ being the extendibility of the system. 
Matuschke at al.~\cite{MatuschkeSkutellaSoto/14} describe a randomized algorithm for this problem that, under the assumption that the adversary does not know the outcome of the randomness, has competitive ratio $\ln(4)\approx1.386$.

A variant of the knapsack problem with a similar notion of robustness was proposed by Kakimura et al.~\cite{Kakimura2012}.
In this problem a knapsack solution needs to be computed, such that, for every $k$, the value of the~$k$ most valuable items in the knapsack compares well with the optimum solution using $k$ items, for every $k$. 
Kakimura et al.~\cite{Kakimura2012} restrict themselves to polynomial time algorithms and show that under this restriction a bounded competitive ratio is possible only if the rank quotient of the knapsack system is bounded. 
In contrast, our results show that if we do not restrict the running time and if we only require our solution to \emph{contain} a good packing with $k$ items for
every $k$, then we can be $(1+\phi)$-competitive using our generic algorithm, even for generalizations like \noun{Multi-Dimensional Knapsack}. 
If we restrict the running time and use the well-known PTAS for the knapsack problem \cite{IbarraKim/75,Lawler/79}, we still get a $(1+\phi)(1+\varepsilon)$-competitive algorithm. 
Megow and Mestre~\cite{MegowMestre/13} and Disser et al.~\cite{DisserKlimmMegowStiller/17} considered another variant of the knapsack problem that asks for an order in which to pack the items that works well for every knapsack capacity. Kobayashi and Takizawa~\cite{KobayashiTakazawa2016} study randomized strategies for cardinality robustness in the knapsack problem.

Hartline and Sharp~\cite{HartlineSharp/07} considered an incremental variant of the maximum flow problem where capacities increase over
time. 
This is in contrast to our framework where the cardinality of the solution increases.

Incremental solutions for cardinality constrained minimization problems have been studied extensively, in particular for clustering~\cite{CharikarChekuriFederMotwani/04,DasguptaLong/05}, $k$-median~\cite{ChrobakKenyonNogaYoung/07,Fotakis2006,MettuPlaxton2003}, minimium spanning tree~\cite{BlumChalasaniCoppersmithEtAl/94,GoemansKleinberg/98}, and facility location~\cite{Plaxton/06}. 
An important result in this domain is the incremental framework given by Lin et al.~\cite{LinEtAl2010}.
This general framework allows to devise algorithms for every incremental minimization problem for which a suitable augmentation subroutine
can be formulated. 
Lin et al.~\cite{LinEtAl2010} used their framework to match or improve many of the known specialized bounds for the problems
above and to derive new bounds for covering problems. 
In contrast to their result, our incremental framework allows for a general algorithm that works out-of-the-box for a broad class of incremental maximization
problems and yields a constant (relatively small) competitive ratio.

Abstractly, incremental problems can be seen as optimization problems under uncertainty. 
Various approaches to handling uncertain input data have been proposed, ranging from robust and stochastic optimization to streaming and exploration. 
On this level, incremental problems can be seen as a special case of online optimization problems, i.e., problems where the input data arrives over time (see~\cite{BorodinElYaniv/98,FiatWoeginger/98}).
Whereas online optimization in general assumes adversarial input, incremental problems restrict the freedom of the adversary to deciding when to stop, i.e., the adversary may choose the cardinality $k$ while all other data is fixed and known to the algorithm. 
Online problems with such a ``non-adaptive'' adversary have been studied in other contexts~\cite{DaniHayes2006,FaigleKernTuran1989,HalldorssonShachnai2010}.
Note that online problems demand irrevocable decisions in every time step -- a requirement that may be overly restrictive in many settings where solutions develop over a long time period.
In contrast, our incremental model only requires a growing solution ``infrastructure'' and allows the actual solution to change arbitrarily over time within this infrastructure.

\section{A competitive algorithm for incremental problems}\label{sec:inc_algo}

In this section, we show the second part of Theorem~\ref{thm:main-incremental}, i.e., we give an incremental algorithm that is $(1+\phi \approx 2.618)$-competitive for all incremental problems.
For convenience, we define the \emph{density} $\delta_S$ of a set~$S\subseteq \solelements$ via~$\delta_S := f(S)/\left|S\right|$, and we let $\dopt{k} := \delta_{\opt{k}}$ denote the optimum density for cardinality~$k$.
Our algorithm relies on the following two observations that follow from the accountability of the objective function.

\begin{lemma}
    In every incremental problem and for every cardinality~$k$, there is an ordering~$\vec{S}^{\star}_k := \{s^{\star}_1,s^{\star}_2,\dots,s^{\star}_k\} := \opt{k}$, such that~$\delta_{\{s^{\star}_1,\dots,s^{\star}_i\}} \geq \delta_{\{s^{\star}_1,\dots,s^{\star}_{i+1}\}}$ for all~$i\in\{1,\dots,k-1\}$.
    We say that~$\vec{S}^{\star}_k$ is a \emph{greedy order} of~$\opt{k}$.\label{lem:opt_greedy_order}
\end{lemma}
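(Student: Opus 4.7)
The natural approach is to build the desired greedy ordering of $\opt{k}$ from the end backwards, using accountability to peel off one element at a time. I would start with the full optimum set $\opt{k}$, apply accountability to obtain an element whose removal shrinks the objective by at most a $1/k$ fraction, and designate that element as $s^{\star}_k$. Then I would recursively apply accountability to the remaining set $\opt{k}\setminus\{s^{\star}_k\}$ of size $k-1$ to extract $s^{\star}_{k-1}$, and continue in this fashion until all $k$ elements have been assigned indices $k, k-1, \ldots, 1$. Because the set on which accountability is invoked shrinks by exactly one per step, the recursion terminates cleanly after $k$ iterations with a full labelling of $\opt{k}$.

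To verify the density condition, I would let $T_i := \{s^{\star}_1,\dots,s^{\star}_i\}$ denote the prefix produced by the backwards construction and examine two consecutive prefixes. By the choice of $s^{\star}_{i+1}$ via accountability applied to $T_{i+1}$, we have
\[
f(T_i) \;=\; f(T_{i+1}\setminus\{s^{\star}_{i+1}\}) \;\geq\; f(T_{i+1}) - \frac{f(T_{i+1})}{i+1} \;=\; \frac{i}{i+1}\,f(T_{i+1}).
\]
Dividing by $i$ rearranges this inequality into $\delta_{T_i} \geq \delta_{T_{i+1}}$, which is exactly the claimed monotonicity of densities along the prefixes of $\vec{S}^{\star}_k$.

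I do not anticipate a real technical obstacle: accountability is precisely the local guarantee that, iterated downward from $\opt{k}$, assembles into the global density ordering we want. The only things to double-check are that accountability is applicable at every step (which holds since each set in the recursion has at least one element) and that the final indexing recovers exactly the set $\opt{k}$ (which is immediate because we only remove elements, never introduce new ones). Neither monotonicity nor sub-additivity of $f$ is used in the argument, so the lemma rests entirely on accountability.
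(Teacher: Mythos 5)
Your proposal is correct and takes essentially the same approach as the paper: the paper also peels off elements backwards from $\opt{k}$ via repeated application of accountability, observing at each step that the density of the remaining prefix does not decrease. The only difference is cosmetic—you spell out the algebraic verification $f(T_i)\geq\frac{i}{i+1}f(T_{i+1})\Rightarrow\delta_{T_i}\geq\delta_{T_{i+1}}$ and note that only accountability (not monotonicity or sub-additivity) is used, while the paper folds both into a single displayed inequality.
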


\begin{proof}
    By accountability of~$f$, there is an element~$s^{\star}_k \in \opt{k}$ for which
    \[ \delta_{\opt{k} \setminus \{s^{\star}_k\}} = \frac{f(\opt{k}\setminus\{s^{\star}_k\})}{k - 1} \geq \frac{f(\opt{k})}{k} = \delta_{\opt{k}}. \]
    We can repeat this argument for~$s^{\star}_{k-1} \in \opt{k}\setminus\{s^{\star}_k\}$, $s^{\star}_{k-2} \in \opt{k}\setminus\{s^{\star}_k, s^{\star}_{k-1}\}$, etc.~to obtain the desired ordering~$\vec{S}^{\star}_k$.
\end{proof}

\begin{lemma}
    In every incremental problem and for every~$1 \leq k' \leq k$ we have~$\dopt{k'} \geq \dopt{k}$.\label{lem:decreasing_density}
\end{lemma}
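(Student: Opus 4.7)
The plan is to deduce this directly from Lemma~\ref{lem:opt_greedy_order}. Fix $1 \leq k' \leq k$ and apply Lemma~\ref{lem:opt_greedy_order} to obtain a greedy order $\vec{S}^{\star}_k = \{s^{\star}_1, s^{\star}_2, \dots, s^{\star}_k\}$ of $\opt{k}$ along which the density of successive prefixes is non-increasing. Chaining the inequalities $\delta_{\{s^{\star}_1,\dots,s^{\star}_i\}} \geq \delta_{\{s^{\star}_1,\dots,s^{\star}_{i+1}\}}$ from $i = k'$ to $i = k-1$ yields
\[
\delta_{\{s^{\star}_1,\dots,s^{\star}_{k'}\}} \;\geq\; \delta_{\{s^{\star}_1,\dots,s^{\star}_k\}} \;=\; \delta_{\opt{k}} \;=\; \dopt{k}.
\]

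Next, I would observe that the prefix $\{s^{\star}_1,\dots,s^{\star}_{k'}\}$ is a feasible solution of cardinality exactly $k'$. Since $\opt{k'}$ is defined as a maximizer of $f$ over all subsets of $\solelements$ of cardinality $k'$, and density equals $f(\cdot)/k'$ at the fixed cardinality $k'$, the set $\opt{k'}$ is simultaneously a density maximizer among cardinality-$k'$ sets. Therefore
\[
\dopt{k'} \;=\; \frac{f(\opt{k'})}{k'} \;\geq\; \frac{f(\{s^{\star}_1,\dots,s^{\star}_{k'}\})}{k'} \;=\; \delta_{\{s^{\star}_1,\dots,s^{\star}_{k'}\}} \;\geq\; \dopt{k},
\]
which is the desired inequality.

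There is no real obstacle here; the whole argument hinges on having a greedy reordering of $\opt{k}$ whose prefixes have non-increasing density, and that is exactly what Lemma~\ref{lem:opt_greedy_order} provides. The only subtle point worth flagging in the write-up is that $\opt{k'}$ need not be a subset of $\opt{k}$, so we cannot compare the two optima directly; the comparison has to go through the prefix of the greedy order of $\opt{k}$ as an intermediate witness.
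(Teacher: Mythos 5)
Your proof is correct and is essentially the paper's argument in a slightly repackaged form: the paper applies accountability once to show $\dopt{k} \leq \dopt{k-1}$ (by comparing $\opt{k}\setminus\{s^{\star}\}$ to $\opt{k-1}$) and then implicitly iterates, while you instead invoke Lemma~\ref{lem:opt_greedy_order} to build the whole greedy order of $\opt{k}$ at once, chain the prefix-density inequalities inside that single set, and finish by comparing the $k'$-prefix against $\opt{k'}$. Both proofs reduce to accountability plus optimality of $\opt{k'}$; your version has the small advantage of staying within one fixed ordering rather than an implicit induction over a sequence of different optima, and your remark that $\opt{k'}$ need not be a subset of $\opt{k}$ correctly identifies why the prefix of the greedy order is needed as a witness.
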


\begin{proof}
    Fix any cardinality~$k > 1$.
    By accountability of the objective function~$f$, there is an element~$s^{\star} \in \opt{k}$ with
    \[ \dopt{k} = \frac{f(\opt{k})}{k} \leq \frac{f(\opt{k}\setminus\{s^{\star}\})}{k-1} \leq \frac{f(\opt{k-1})}{k-1} = \dopt{k-1}. \]
    It follows that~$\dopt{k}$ is monotonically decreasing in~$k$.
\end{proof}

Now, we define~$k_0 := 1$ and~$k_i := \lceil(1+\phi)k_{i-1}\rceil$ for all positive integers~$i$.
Our algorithm operates in phases~$i\in\{0,1,\dots\}$.
In each phase~$i$, we add the elements of the optimum solution~$\opt{k_i}$ of cardinality~$k_i$ to our incremental solution in greedy order (Lemma~\ref{lem:opt_greedy_order}). 
Note that we allow the algorithm to add elements multiple times (without effect) in order to not complicate the analysis needlessly (of course we would only improve the algorithm by skipping over duplicates).
In the following, we denote by $t_i$ the number of steps (possibly without effect) until the end of phase~$i$, i.e., we let~$t_0 := k_0$ and~$t_i := t_{i-1} + k_i$.

\begin{lemma}
    For every phase~$i\in\{0,1,\dots\}$, we have~$t_i \leq \phi k_i$.\label{lem:phase_length}
\end{lemma}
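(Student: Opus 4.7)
The plan is to prove the bound by induction on~$i$, making crucial use of the defining identity of the golden ratio, namely~$\phi^{2}=\phi+1$, or equivalently~$\phi(\phi-1)=1$.

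For the base case~$i=0$, we simply note that~$t_{0}=k_{0}=1 \leq \phi = \phi k_{0}$, since~$\phi > 1$. For the inductive step, I assume the bound~$t_{i-1} \leq \phi k_{i-1}$ holds and unfold the recurrence for~$t_{i}$:
\[
t_{i} \;=\; t_{i-1} + k_{i} \;\leq\; \phi k_{i-1} + k_{i}.
\]
It therefore suffices to show that~$\phi k_{i-1} + k_{i} \leq \phi k_{i}$, which rearranges to~$\phi k_{i-1} \leq (\phi-1)k_{i}$. Multiplying both sides by~$\phi$ and applying the identity~$\phi(\phi-1)=1$ transforms this into the equivalent inequality~$\phi^{2}k_{i-1}=(1+\phi)k_{i-1} \leq k_{i}$.

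The last inequality is immediate from the definition of the sequence~$(k_{i})$, since
\[
k_{i} \;=\; \lceil(1+\phi)k_{i-1}\rceil \;\geq\; (1+\phi)k_{i-1},
\]
which closes the induction. I do not anticipate any real obstacle here: the calibration constant~$1+\phi$ in the definition of~$k_{i}$ is chosen precisely so that the ceiling-inflated recurrence for~$t_{i}$ telescopes under the target bound, and the golden ratio identity converts the required inequality into exactly the recurrence that defines~$k_{i}$.
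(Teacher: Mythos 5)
Your proof is correct and follows essentially the same approach as the paper: induction on $i$, the golden-ratio identity $\phi^2 = \phi + 1$ (the paper writes it as $\tfrac{\phi}{\phi+1} = \phi - 1$), and the observation $k_i = \lceil(1+\phi)k_{i-1}\rceil \geq (1+\phi)k_{i-1}$. The only difference is a cosmetic rearrangement of the algebra in the inductive step.
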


\begin{proof}
    We use induction over~$i$, with the case~$i=0$ being trivial, since~$t_0 = k_0$.
    Now assume that $t_{i-1} \leq \phi k_{i-1}$ for some~$i\geq 1$.
    Using the property~$\frac{\phi}{\phi + 1} = \phi - 1$ of the golden ratio, we get
    \[ t_i = t_{i-1} + k_i \leq \phi k_{i-1} + k_{i} \leq \frac{\phi}{\phi + 1} k_i + k_i = \phi k_i.\qedhere\]
\end{proof}

Finally, we show the solution~$\vec{S}$ computed by our algorithm is~$(1+\phi)$-competitive.

\begin{theorem}
    For every cardinality~$k$, we have $\vinc{k} \geq \vopt{k} / (1+\phi)$.
\end{theorem}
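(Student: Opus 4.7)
The plan is to combine two natural lower bounds on $\vinc{k}$ coming from the algorithm's phase structure. Fix $k\geq 2$ (the case $k=1$ is trivial since $\vinc{1}=\vopt{1}$) and let $i\geq 1$ be the unique phase with $t_{i-1}<k\leq t_i$; set $j:=k-t_{i-1}$, the number of elements added in phase~$i$ so far. First, by step $t_{i-1}$ the algorithm has already included all of $\opt{k_{i-1}}$, so monotonicity yields $\vinc{k}\geq\vopt{k_{i-1}}$. Second, within phase~$i$ the algorithm has added a length-$j$ prefix of a greedy ordering of $\opt{k_i}$, which by Lemma~\ref{lem:opt_greedy_order} has density at least $\dopt{k_i}$; hence $\vinc{k}\geq j\,\dopt{k_i}$.

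Next I would split the analysis at the threshold $k=(1+\phi)k_{i-1}$. In the low case $k\leq(1+\phi)k_{i-1}$, the first bound combined with density monotonicity (Lemma~\ref{lem:decreasing_density}) yields
\[
\vinc{k}\geq\vopt{k_{i-1}}=k_{i-1}\dopt{k_{i-1}}\geq k_{i-1}\dopt{k}=\frac{k_{i-1}}{k}\vopt{k}\geq\frac{\vopt{k}}{1+\phi}.
\]
In the high case $k>(1+\phi)k_{i-1}$, the integrality of $k$ together with $k_i=\lceil(1+\phi)k_{i-1}\rceil$ forces $k\geq k_i$, so density monotonicity now runs in the convenient direction $\dopt{k_i}\geq\dopt{k}$. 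Substituting this into the second bound and using $j\geq k-\phi\,k_{i-1}$ from Lemma~\ref{lem:phase_length}, I would derive
\[
\vinc{k}\geq j\,\dopt{k_i}\geq j\,\dopt{k}=\frac{j}{k}\vopt{k}\geq\left(1-\phi\,\frac{k_{i-1}}{k}\right)\vopt{k}>\left(1-\frac{\phi}{1+\phi}\right)\vopt{k}=\frac{\vopt{k}}{1+\phi},
\]
using the golden-ratio identity $\phi/(1+\phi)=1-1/(1+\phi)$ in the last step.

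The main obstacle is lining up the two bounds so that each is applied exactly where it is tight. The ``previous-phase'' bound $\vopt{k_{i-1}}$ weakens as $k$ advances inside phase~$i$, while the ``greedy-prefix'' bound $j\,\dopt{k_i}$ is weak right after $t_{i-1}$ (when $j$ is small). The threshold $k=(1+\phi)k_{i-1}$ is the natural balancing point because it matches both the designed phase growth rate $k_i/k_{i-1}\approx 1+\phi$ and the bound $t_{i-1}\leq\phi\,k_{i-1}$ from Lemma~\ref{lem:phase_length}. A small integrality observation---that $(1+\phi)k_{i-1}$ is irrational for every positive integer $k_{i-1}$---is what forces $k\geq k_i$ in the high case, and this is precisely what makes density monotonicity point the right way so that $\vopt{k}$ rather than $\vopt{k_i}$ appears on the right-hand side.
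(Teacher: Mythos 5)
Your proof is correct and follows essentially the same strategy as the paper: the same two lower bounds on $\vinc{k}$ (the completed previous phase giving $\vopt{k_{i-1}}$, and the greedy prefix of the current phase giving $j\,\dopt{k_i}$), the same case split at $k_i=\lceil(1+\phi)k_{i-1}\rceil$, and the same use of Lemmas~\ref{lem:opt_greedy_order}--\ref{lem:phase_length}. The only difference is cosmetic: in the high case you bound $j/k$ directly from $t_{i-1}\leq\phi k_{i-1}$ and the strict inequality $k>(1+\phi)k_{i-1}$, avoiding the paper's explicit floor/ceiling manipulation, which is a slightly cleaner way to close the same estimate (and as a small aside, the identity $\phi/(1+\phi)=1-1/(1+\phi)$ you invoke there is just algebra, not a special property of the golden ratio).
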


\begin{proof}
    We use induction over~$k$.
    The claim is true for~$k=t_0=1$, since~$\inc{1} = \opt{1}$ by definition of the algorithm.
    For the inductive step, we prove that if the claim is true for~$k=t_{i-1}$, then it remains true for all~$k\in\{t_{i-1} + 1, \dots, t_i\}$.
    Recall that~$k_i = \lceil(1+\phi)k_{i-1}\rceil$.
    By Lemma~\ref{lem:phase_length}, we have
    \[ t_{i-1} \leq \phi k_{i-1} < k_i < t_{i-1} + k_i = t_i, \]
    and we can therefore distinguish the following cases.
    
    \boldall{Case 1: $t_{i-1} < k < k_i$.} Since $k > t_{i-1}$, our algorithm has already completed phase~$i-1$ and added all elements of $\opt{k_{i-1}}$, so we have~$\vinc{k} \geq \vopt{k_{i-1}}$.
    Because~$k$ is an integer and $k < k_i = \lceil(1+\phi)k_{i-1}\rceil$, we have that~$k < (1+\phi)k_{i-1}$.
    By Lemma~\ref{lem:decreasing_density}, we thus have
    \[ \vopt{k} = \dopt{k}\cdot k < \dopt{k_{i-1}}\cdot(1+\phi)k_{i-1} = (1+\phi)\vopt{k_{i-1}} \leq (1+\phi)\vinc{k}. \]
    
    \boldall{Case 2: $k_i \leq k \leq t_i$.} 
    At time~$k$, our algorithm has already completed the first $k - t_{i-1}$ elements of $\opt{k}$.
    Since the algorithm adds the elements of~$\opt{k_i}$ in greedy order, we have~$\vinc{k} \geq (k - t_{i-1})\dopt{k_i}$.
    On the other hand, since~$k \geq k_i$, by Lemma~\ref{lem:decreasing_density} we have~$\vopt{k} = k\cdot\dopt{k} \leq k\cdot\dopt{k_i}$.
    In order to complete the proof, it is thus sufficient to show that~$k \leq (1+\phi)(k-t_{i-1})$.
    To see this, let~$k = k_i + k'$ for some non-negative integer~$k'$.
    Because~$t_{i-1}$ is integral, Lemma~\ref{lem:phase_length} implies~$t_{i-1} \leq \lfloor \phi k_{i-1} \rfloor$.
    Since~$\phi$ is irrational and~$k_{i-1}$ is integral, $\phi k_{i-1}$ cannot be integral, thus
    \[ k - t_{i-1} = k' + k_i - t_{i-1} \geq k' + \lceil(1+\phi)k_{i-1}\rceil - \lfloor \phi k_{i-1} \rfloor = k' + k_{i-1} + 1. \]
    This completes the proof, since
    \[ (1+\phi)(k - t_{i-1}) \geq (1+\phi)(k' + k_{i-1} + 1) > k' + (1+\phi)k_{i-1} + 1 \geq k' + k_i = k. \qedhere\]
\end{proof}

Corollary~\ref{cor:polynomial_algorithm} follows if we replace~$\opt{k_i}$ by an $\alpha$-approximate solution for cardinality~$k_i$.

\section{Lower bound on the best-possible competitive ratio}\label{sec:lower_bound}

In this section, we show the second part of Theorem~\ref{thm:main-incremental}, i.e., we give a lower bound on the best-possible competitive ratio for the maximization of incremental problems. 
For this purpose, we define the \noun{Region Choosing} problem. 
In this problem, we are given $\noregions$ disjoint sets $R_1,\dots,R_\noregions$, called \emph{regions}, with region $R_i$ containing $i$ elements with a value of $\densityr{i}$ each. 
We say that~$\delta(i)$ is the \emph{density} of region~$R_i$.
The total value of all elements in the region $R_i$ is $\vallb{i} := i \cdot \densityr{i}$ for all $i \in \regionsin$.

The objective is to compute an incremental solution $S \subseteq \solelements := \bigcup_{i=1}^{\noregions}R_{i}$ such that the maximum value of the items from a single region in $S$ is large. 
Formally, the objective function is given by $f(S) := \max_{i\in\regionsin}\left|R_{i}\cap
S\right|\cdot \densityr{i}$.\footnote{The following and all other proofs omitted from the main part of the paper are deferred to the appendix.} 

\begin{restatable}{observation}{obsregionchoosing}
 \noun{Region Choosing} is an incremental problem.
\end{restatable}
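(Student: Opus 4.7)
The plan is to verify the three defining properties of an incremental function for $f(S) = \max_{i \in \{1,\dots,N\}} |R_i \cap S| \cdot \densityr{i}$, exploiting crucially that the regions $R_1,\dots,R_N$ are pairwise disjoint.

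First I would dispatch monotonicity: for $S \subseteq T$ and every $i$ we have $|R_i \cap S| \leq |R_i \cap T|$, so every term in the max defining $f(S)$ is bounded by the corresponding term in the max defining $f(T)$. For sub-additivity, let $i^{*} \in \argmax_i |R_i \cap (S \cup T)| \cdot \densityr{i}$. Since $R_{i^*} \cap (S \cup T) \subseteq (R_{i^*}\cap S) \cup (R_{i^*}\cap T)$, we get $|R_{i^*} \cap (S\cup T)| \leq |R_{i^*} \cap S| + |R_{i^*} \cap T|$; multiplying by $\densityr{i^*}$ and bounding each summand by $f(S)$ and $f(T)$ respectively yields $f(S\cup T) \leq f(S) + f(T)$.

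The only step that needs more care is accountability. Given $S$, fix an index $i^{*}$ attaining the maximum in $f(S)$, and write $k := |R_{i^*} \cap S|$, so $f(S) = k \cdot \densityr{i^{*}}$. I would split on whether $S$ is fully contained in $R_{i^{*}}$. If not, then there exists $s \in S \setminus R_{i^{*}}$; removing $s$ leaves $R_{i^*} \cap S$ unchanged (here disjointness of the regions is essential, since $s$ also lies in no other $R_j$ it might have contributed to), so $f(S\setminus\{s\}) \geq k \cdot \densityr{i^{*}} = f(S)$, which certainly satisfies the required inequality. Otherwise $S \subseteq R_{i^{*}}$, so $k = |S|$ and for any $s \in S$ we have $f(S\setminus\{s\}) \geq (k-1) \densityr{i^{*}} = f(S) - \densityr{i^{*}} = f(S) - f(S)/|S|$.

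I expect no real obstacle; the main thing to get right is not to accidentally rely on $s$ coming from the maximum region in the first subcase, which would fail. Disjointness of the regions ensures that removing an element outside $R_{i^{*}}$ cannot decrease $|R_j \cap S|$ for any $j$, including $i^{*}$, so the witness region keeps its full count and $f$ is preserved exactly. The two cases together give accountability, completing the verification that \noun{Region Choosing} is an incremental problem.
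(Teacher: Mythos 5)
Your proof is correct and follows essentially the same route as the paper's: monotonicity is immediate, sub-additivity is argued by looking at the region achieving the maximum in $S \cup T$ and splitting its contribution between $S$ and $T$, and accountability is handled by the same two-case analysis on whether $S$ is contained in the maximizing region (where removing any element loses exactly a $1/|S|$ fraction) or not (where removing an element outside that region leaves $f(S)$ unchanged). The only cosmetic difference is that the paper records equalities in the accountability cases where you write the weaker, but sufficient, inequalities.
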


For our lower bound, we set $\densityr{i} := i^{\beta-1}$ for some $\beta \in (0,1)$ that we will choose later. For this choice of $\beta$, we have $\densityr{i} < \densityr{j}$ and $\vallb{i} > \vallb{j}$ for $0 \leq j < i \leq \noregions$. 
Also, for $\noregions \to \infty$ we have $\lim_{i\to\infty} \vallb{i} = \infty$. 
We call instances of the \noun{Region Choosing} problem in this form \emph{$\beta$-decreasing}.
Observe that in every $\beta$-decreasing instance the optimum solution of cardinality $i \leq \noregions$ is to take all $i$ elements from region $R_i$. 
This solution has value $\vopt{i} = i^\beta$.

In order to impose a lower bound on the best-possible competitive ratio for $\beta$-decreasing instances, we need some insights into the structure of incremental solutions with an optimal competitive ratio. 
First, consider a solution that picks only $i' < i$ elements from region~$R_i$. 
In this case, we could have picked~$i'$ elements from region~$R_{i'}$ instead -- this would only improve the solution, since densities are decreasing. 
Secondly, if we take~$i$ elements from region~$R_i$, it is always beneficial to take them in an uninterrupted sequence before taking any elements from a region~$R_j$ with~$j>i$: Our objective depends only on the region with the most value, therefore it never helps to take elements from different regions in an alternating fashion.
This leads us the following observation.

\begin{observation}
 For every $\beta$-decreasing instance of \noun{Region Choosing} there is an incremental solution with optimal competitive ratio of the following structure: 
For $k_0 < k_1 < \dots < k_\maxpicks \in \N$ with $\maxpicks\in\N$, it takes $k_0$ elements from region $R_{k_0}$, followed by $k_1$ elements from $R_{k_1}$, and so on, until finally $k_\maxpicks$ elements from region $R_{k_\maxpicks}$ are chosen. 
\end{observation}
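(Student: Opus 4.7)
Given any incremental solution~$\vec{S}$ that attains the optimal competitive ratio, I construct a structured solution~$\vec{T}$ of the claimed form whose competitive ratio is no larger. The construction uses two reductions, each exploiting that $\delta(i)=i^{\beta-1}$ strictly decreases in~$i$ while $\vallb{i}=i^\beta$ strictly increases.

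\emph{Step 1 (complete every touched region).} Using the identity $f(\vec{S}_t)=\max_j|R_j\cap\vec{S}_t|\cdot\delta(j)$: if $\vec{S}$ eventually picks only $i'<i$ elements of some~$R_i$, rerouting those picks to $i'$ picks from an unused region~$R_{i^*}$ with $i^*\leq i'$ yields a strictly larger contribution at every cardinality, since $\delta(i^*)\geq\delta(i')>\delta(i)$. This only increases $f(\vec{S}_t)$ pointwise, so the competitive ratio cannot worsen. Processing touched regions from largest to smallest index, I may assume that every touched region of~$\vec{S}$ is fully taken.

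\emph{Step 2 (sort into strictly increasing-index blocks).} Let $i_1<i_2<\cdots<i_m$ be the touched indices of the modified~$\vec{S}$, and let~$\vec{T}$ be the structured solution that picks $R_{i_1},\dots,R_{i_m}$ as successive full blocks in this order. Both~$\vec{S}$ and~$\vec{T}$ use exactly the same multiset of $i_1+\cdots+i_m$ picks, differing only in their ordering. I will argue by a local-exchange argument that sorting picks by increasing index minimizes the competitive ratio among all orderings of this multiset: the intuition is that at small cardinalities the ratio benefits most from higher density (small indices), while at large cardinalities the ratio benefits from higher total value (large indices), and the sorted ordering satisfies both simultaneously. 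Setting $k_l:=i_{l+1}$ for $l=0,\dots,m-1$ then gives the claimed strictly increasing sequence.

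The principal obstacle is verifying the local-exchange claim in Step~2. A single adjacent swap of a larger-index pick past a smaller-index pick affects~$f$ at only one cardinality~$t$, and the sign of the change is not uniform: $f(\vec{S}_t)$ may actually decrease if the larger-index region happened to be the argmax at~$t$. The delicate argument is therefore not a pointwise-domination argument but a careful comparison of worst-case ratios across cardinalities, where decreases of~$f$ occur only at cardinalities where the original ratio was strictly below the supremum, whereas increases of~$f$ concentrate at the critical cardinalities where the sorted solution's worst ratio is attained, so the supremum can only decrease.
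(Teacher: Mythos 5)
The central claim in your Step~2 — that sorting picks by increasing region index minimizes the competitive ratio among all orderings of a fixed multiset of picks — is false. Take a $\beta$-decreasing instance with $\beta=0.86$ and compare the order $R_2,R_2,R_1$ to its sorted counterpart $R_1,R_2,R_2$, with identical picks from $t=4$ onward. The unsorted order has $f$-values $\bigl(\delta(2),\,2\delta(2),\,2\delta(2)\bigr)$ and ratios $\bigl(2^{1-\beta},\,1,\,(3/2)^\beta\bigr)$; the sorted order has $f$-values $\bigl(1,\,1,\,2\delta(2)\bigr)$ and ratios $\bigl(1,\,2^\beta,\,(3/2)^\beta\bigr)$. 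Since $2^\beta>(3/2)^\beta>2^{1-\beta}$ for $\beta>1/2$, sorting strictly \emph{increases} the supremum of the ratio (from $\approx 1.42$ to $\approx 1.82$). The mechanism is exactly what your own hand-waved resolution rules out: in $R_2,R_2,R_1$ the pick of $R_1$ is dead weight (it comes after $R_2$ is complete and $\vallb{1}<\vallb{2}$), but after sorting that pick occupies slot $t=1$ and pushes the completion of $R_2$ to $t=3$, creating a new critical cardinality $t=2$ whose ratio strictly exceeds the original supremum. So your assertion that ``decreases of $f$ occur only at cardinalities where the original ratio was strictly below the supremum'' is precisely the claim that fails.

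The missing idea, which the paper's own two-sentence informal justification also leaves implicit, is an intermediate pruning step: before sorting, one must delete (or reroute) every region $R_i$ whose picks all occur after some region $R_j$ with $j>i$ is already complete — such a region's picks can never raise $f$ above $\vallb{j}>\vallb{i}$ and hence are pure waste. Only after this pruning does sorting by increasing index become a \emph{pointwise} improvement of $f$ (which then immediately implies no worse competitive ratio), and the delicate non-pointwise exchange argument you were trying to set up is never needed. Your Step~1 also has a smaller gap: it presumes the existence of an unused region $R_{i^*}$ with $i^*\leq i'$ to reroute partial picks into, but this need not exist if $R_1,\dots,R_{i'}$ are all already touched; you should either justify availability of a target or weaken the target to any unused $R_{i^*}$ with $i^*<i$ and argue that works.
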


Thus, we can describe an algorithm for the region-choosing problem by an increasing sequence of region indices $k_0,\dots,k_\maxpicks$. 
Note that, in order to have a bounded competitive ratio if $\noregions \to \infty$, we must have $\maxpicks \to \infty$, since $\lim_{i\to\infty}\vallb{i} = \infty$. 
We are interested in a cardinality for which an incremental solution given by $k_0,\dots,k_\maxpicks$ has a bad competitive ratio. 
We define
\[
 \far_i := \frac{1}{k_i}\sum_{j=0}^i k_j \qquad \text{for all } i \in \set{0,\dots,\maxpicks}.
\]
Observe that $\far_i > 1$ for all $i\in\{1,\dots,\maxpicks\}$.
We know that the value of the optimum solution for cardinality~$\far_i k_i$ is $\vallb{\far_i k_i} = (\far_i k_i)^\beta$, whereas the incremental solution only achieves a value of $\vallb{k_i} = (k_i)^\beta$. 
This allows us to derive the following necessary condition on the $\far_i$-values of $\rho$-competitive solutions.

\begin{observation}
 If an incremental solution defined by a sequence $k_0,\dots,k_\maxpicks$ is \mbox{$\rho$-com}petitive for some $\rho \geq 1$, we must have
 \begin{equation}
  \rho \geq \frac{\vallb{\far_i k_i}}{\vallb{k_i}} = \left(\frac{\far_i k_i}{k_i}\right)^\beta = \far_i^\beta  \quad \eq \quad \far_i \leq \rho^{\frac{1}{\beta}} \qquad \text{for all } i \in \set{0,\dots,\maxpicks}.
  \label{eq:rho_condition}
 \end{equation} 
\end{observation}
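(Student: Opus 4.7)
The plan is to pin down a specific cardinality at which the competitive ratio of the solution described by $k_0,\dots,k_\maxpicks$ can be read off directly, namely $k := \far_i k_i$. By definition, $\far_i k_i = \sum_{j=0}^{i} k_j$ is precisely the number of elements the incremental solution has taken once it has finished picking from region $R_{k_i}$, so this is an integer and a legitimate cardinality to evaluate at (taking $\noregions$ sufficiently large, which we may since the lower bound is obtained in the limit $\noregions\to\infty$).

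At cardinality $k = \far_i k_i$, the incremental solution contains all elements of regions $R_{k_0},R_{k_1},\dots,R_{k_i}$ and nothing else. Since the instance is $\beta$-decreasing, we have $\vallb{k_0}<\vallb{k_1}<\dots<\vallb{k_i}$, so the defining maximum in $f$ is attained at $R_{k_i}$, giving $\vinc{\far_i k_i} = \vallb{k_i} = k_i^\beta$. On the other hand, the optimum solution for cardinality $\far_i k_i$ places all elements into the single region of size $\far_i k_i$, achieving $\vopt{\far_i k_i} = (\far_i k_i)^\beta$.

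Combining the two and invoking $\rho$-competitiveness at cardinality $\far_i k_i$ yields
\[
\rho \;\geq\; \frac{\vopt{\far_i k_i}}{\vinc{\far_i k_i}} \;=\; \frac{(\far_i k_i)^\beta}{k_i^\beta} \;=\; \far_i^\beta,
\]
which, since $\beta\in(0,1)$ and $\far_i\geq 1$, is equivalent to $\far_i \leq \rho^{1/\beta}$. The case $i=0$ is trivial because $\far_0=1$. There is no genuine obstacle here; the only thing to be careful about is that $\far_i k_i$ is indeed an integer (which follows immediately from the definition of $\far_i$) and that picking this particular cardinality to test competitiveness is legitimate, which is ensured by taking $\noregions$ large enough.
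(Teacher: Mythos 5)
Your argument is correct and matches the paper's own reasoning: the paper justifies the observation by noting (in the sentence immediately preceding it) that at cardinality $\far_i k_i = \sum_{j=0}^{i} k_j$ the optimum has value $(\far_i k_i)^\beta$ while the incremental solution has only $k_i^\beta$, which is exactly the cardinality you test and the comparison you make. Your added care about $\far_i k_i$ being an integer, the maximum over regions being attained at $R_{k_i}$, and taking $\noregions$ large enough are the right details to check, and they all go through.
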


We will exclude a certain range of values of $\rho$ by showing that we can find a $\beta \in (0,1)$ such that, for a sufficiently large number of regions $\noregions$, necessary condition~\eqref{eq:rho_condition} is violated.
We do this by showing that, for some $i^\star \in \mathbb{N}$ and some fixed~$\eps > 0$, we have~$\alpha_{i+1} - \alpha_{i} > \eps$ for all~$i\geq i^{\star}$, i.e., as $i$ goes to $\infty$, condition~\eqref{eq:rho_condition} must eventually be violated.
The following definition relates a value of~$\beta \in (0,1)$ to a lower bound on the competitive ratio~$\rho$ for~$\beta$-decreasing instances.

\begin{definition}
    \label{def:problematic}
 A pair $(\rho, \beta)$ with $\rho \geq 1$ and $\beta \in (0,1)$ is \emph{problematic} if there is $\eps>0$ such that for all $x \in (1,\rho^{1/\beta}]$ it holds that~$\function_{\rho,\beta}(x)<0$, where 
 \begin{equation*}
  \function_{\rho,\beta}(x) := (\rho^{\frac{1}{\beta}} + \eps - x)^\frac{1}{1-\beta} - \frac{x}{x-1+\eps}.
 \end{equation*}
\end{definition}

We show that problematic pairs indeed have the intended property.

\begin{restatable}{lemma}{lemrhobeta}
  If $(\rho, \beta)$ is a problematic pair, then $\rho$ is a strict lower bound on the competitive ratio of incremental solutions for~$\beta$-decreasing instances of \noun{Region Choosing}.
  \label{lem:rho_beta}
\end{restatable}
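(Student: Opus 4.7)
My plan is to assume for contradiction a $\rho$-competitive algorithm on $\beta$-decreasing \noun{Region Choosing} instances with arbitrarily many regions, and to show that the sequence $\far_0, \far_1, \ldots$ must grow by at least the $\eps$ from Definition~\ref{def:problematic} at every step past some index. This will eventually violate the necessary bound $\far_i \leq \gamma := \rho^{\frac{1}{\beta}}$ from~\eqref{eq:rho_condition}. By the preceding structural observations, such an algorithm is described by an increasing sequence $k_0 < k_1 < \dots < k_\maxpicks$, and to stay $\rho$-competitive as $\noregions \to \infty$ we need $\maxpicks \to \infty$ (otherwise $\vinc{\noregions}$ stays bounded while $\vopt{\noregions}\to\infty$).

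The first concrete step is to strengthen the constraint on $k_{i+1}/k_i$ by looking at intermediate cardinalities within phase $i+1$. While filling $R_{k_{i+1}}$, after $k'$ of its elements have been added the solution's value is $\max(k_i^\beta,\; k' k_{i+1}^{\beta-1})$, and a short calculus argument shows that the ratio of $\vopt{\cdot}$ to this quantity is maximized at the crossover $k^\star := k_i^\beta k_{i+1}^{1-\beta}$, yielding continuous worst value $(\far_i + (k_{i+1}/k_i)^{1-\beta})^\beta$. Because $k'$ is restricted to integers, the worst realized ratio only approaches this continuous maximum from below, but for $k_i$ large enough I would use continuity near $k^\star$ to upgrade $\rho$-competitiveness to the slightly relaxed bound
\[
k_{i+1}/k_i \le (\gamma + \eps - \far_i)^{\frac{1}{1-\beta}},
\]
consuming precisely the slack $\eps$ from Definition~\ref{def:problematic}. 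Combining this with the identity $\far_{i+1} = 1 + \far_i \cdot k_i/k_{i+1}$ (immediate from the definition of $\far$) gives $\far_{i+1} \ge 1 + \far_i / (\gamma + \eps - \far_i)^{\frac{1}{1-\beta}}$.

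Since $\far_i \in (1, \gamma]$ for every $i \geq 1$, the problematic inequality $\function_{\rho,\beta}(\far_i) < 0$ rearranges (by taking reciprocals and multiplying by $\far_i$) to $\far_i / (\gamma + \eps - \far_i)^{\frac{1}{1-\beta}} > \far_i - 1 + \eps$, whence $\far_{i+1} > \far_i + \eps$ for every phase $i$ past some threshold $i^\star$. Iterating forces $\far_i > \gamma$ after a bounded number of further phases, contradicting $\rho$-competitiveness once $\noregions$ (hence $\maxpicks$) is sufficiently large. The main obstacle will be the integer-rounding step: verifying uniformly in $\far_i \in [1, \gamma]$ that the gap between the discrete and continuous worst-case ratios tends to zero as $k_i \to \infty$, so that the $\eps$-slack in Definition~\ref{def:problematic} genuinely absorbs the rounding error. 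Everything else reduces to evaluating the ratio function on both sides of $k^\star$ to pinpoint the maximum and chaining the recurrence for $\far$ with the problematic inequality.
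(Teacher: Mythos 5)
Your proposal is correct and follows essentially the same route as the paper. Both arguments derive a bound of the form $q_{i+1} := k_{i+1}/k_i \le (\rho^{1/\beta} + o(1) - \far_i)^{\frac{1}{1-\beta}}$, combine it with the identity $\far_{i+1} = 1 + \far_i/q_{i+1}$, invoke the problematic-pair inequality to conclude $\far_{i+1} > \far_i + \eps$ for all large $i$, and derive a contradiction with condition~\eqref{eq:rho_condition} once $\noregions$ is large enough. The only difference is in how the bound on $q_{i+1}$ is extracted: the paper fixes the threshold cardinality $\keytime_i := \min\{t \in \N : t > \rho^{1/\beta}k_i\}$ --- the first point past which the contribution of $R_{k_i}$ alone becomes insufficient --- and derives the bound directly from $\rho$-competitiveness there with a clean $+1/k_i$ rounding term; you instead locate the crossover cardinality $k^\star = k_i^\beta k_{i+1}^{1-\beta}$ as the maximizer of $\vopt{\cdot}/f(\cdot)$ over phase $i+1$. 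These coincide when the bound is tight, and yours recovers the paper's bound after the integer-rounding step you flag: since the ratio is unimodal in $k'$, the worst integer cardinality is $\lfloor k^\star\rfloor$ or $\lceil k^\star\rceil$, and a one-line estimate gives $(\far_i + q_{i+1}^{1-\beta} - 1/k_i)^\beta \le \rho$, which is exactly the paper's eq.~\eqref{eq:beta_decreasing_cond2} --- so the obstacle you anticipate is genuinely minor and uniform in $\far_i$.
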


All that remains is to specify a problematic pair in order to obtain a lower bound via Lemma~\ref{lem:rho_beta}.
It is easy to verify that $(2.18, 0.86)$ is a problematic pair.
Note that the resulting bound of~$2.18$ can slightly be increased to larger values below $2.19$.

\begin{theorem}
 There is no $2.18$-competitive incremental \noun{Region Choosing} algorithm. 
\end{theorem}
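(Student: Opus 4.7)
The plan is to invoke Lemma~\ref{lem:rho_beta} with the concrete pair $(\rho,\beta) = (2.18, 0.86)$, which reduces the theorem to verifying that this pair is problematic in the sense of Definition~\ref{def:problematic}. Concretely, for some~$\eps>0$ I must show that
\[
 (\rho^{1/\beta} + \eps - x)^{1/(1-\beta)} < \frac{x}{x-1+\eps}
\]
holds for every $x \in (1, \rho^{1/\beta}]$.

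To organize the verification, first I would compute the two structural constants $R := \rho^{1/\beta} \approx 2.475$ and $\gamma := 1/(1-\beta) \approx 7.143$, which together determine the shape of $\function_{\rho,\beta}$. Then I would fix a small positive value such as $\eps = 10^{-3}$ and check the inequality in three regimes separately: near $x = 1$, where the right-hand side blows up like $1/\eps$ while the bounded left-hand side cannot compete; near $x = R$, where the left-hand side collapses to $\eps^{\gamma} \approx 0$ and the right-hand side is of order $R/(R-1) \approx 1.68$; and in the interior, where both sides are strictly decreasing and of comparable magnitude. The two endpoint regimes are immediate; the real content lies in the interior.

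For the interior regime, I would locate the minimum of the gap between right-hand side and left-hand side by differentiating $\function_{\rho,\beta}$, identifying the unique critical point in $(1, R)$, evaluating the gap there, and confirming numerically that it is strictly negative. A finite-grid evaluation on $[1+\eps, R]$ combined with a standard Lipschitz bound on $\function_{\rho,\beta}$ then upgrades this pointwise check to a fully rigorous verification valid on the whole interval.

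The main obstacle is that $\rho = 2.18$ and $\beta = 0.86$ are tuned so that the worst-case gap is quite small: because the exponent $\gamma \approx 7$ is large, the left-hand side is already of order $(R-1)^\gamma \approx 16$ near $x = 1$, and the right-hand side, while singular at $x = 1$, decays quickly as $x$ increases, so the two curves come within a fraction of each other somewhere around $x \approx 1.17$. One must therefore pick $\eps$ small enough that the left-hand side is not pushed above the right-hand side near the critical point, while carrying out the numerical estimates with enough precision to certify the strict inequality. Any attempt to push $\rho$ substantially closer to the $2.19$ barrier mentioned after the theorem statement would require a similarly careful but even more delicate retuning of $\beta$ and $\eps$.
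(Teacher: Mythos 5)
Your proposal is correct and follows exactly the paper's route: invoke Lemma~\ref{lem:rho_beta} with the concrete pair $(\rho,\beta)=(2.18,0.86)$ and verify that it is problematic. The paper merely asserts ``it is easy to verify'' without details, whereas you spell out the verification (endpoints, interior critical point near $x\approx 1.17$, and a grid-plus-Lipschitz certification), which is a perfectly reasonable way to discharge that step.
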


\section{The greedy algorithm for a subclass of incremental problems}\label{sec:greedy}

In this section, we analyze the greedy algorithm that computes an incremental solution $\vec{S}$ with $\vec{S}_{k}=\vec{S}_{k-1}\cup \{s_{k}\}$, where $s_{k}\in\arg\max_{s\in\solelements\setminus\vec{S}_{k-1}}f(\vec{S}_{k-1}\cup \{s_{k}\})$ and $\vec{S}_{0}=\emptyset$.
This algorithm is well-known to have competitive ratio $\frac{e}{e-1}\approx1.58$ if the objective function $f$ is monotone and submodular~\cite{NemhauserWolseyFisher/78}.
Note that every monotone and submodular function is incremental.
On the other hand, in general, the greedy algorithm does not have a bounded competitive ratio for incremental problems.

\begin{restatable}{observation}{unboundedgreedy}
    The greedy algorithm has an unbounded competitive ratio for many incremental problems, e.g., \noun{Knapsack}, \noun{Weighted Independent Set}, and \mbox{\noun{Disjoint Paths}}.
\end{restatable}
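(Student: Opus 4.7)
My plan is to describe, for each of the three named problems, a simple family of instances on which greedy's competitive ratio is $\Omega(n)$. The underlying mechanism is identical in all cases: the instance contains $n$ pairwise-incompatible ``trap'' elements whose individual $f$-values tie for the maximum among singletons, together with $n$ mutually compatible ``good'' elements forming an optimum solution of value $\Omega(n)$. Greedy picks a trap at step~$1$; thereafter every further candidate leaves $f$ unchanged, so under worst-case tie-breaking greedy stays inside the trap collection for $n$ rounds, which gives $\vinc{n}=\Theta(1)$ while $\vopt{n}=\Omega(n)$.

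For \noun{Weighted Independent Set} I would take a graph on $2n$ vertices with a clique $K=\{v_1,\ldots,v_n\}$, an independent set $I=\{u_1,\ldots,u_n\}$, every edge between $K$ and $I$, and unit weights: the traps are the $v_i$, the good elements are the $u_j$, and $\opt{n}=I$ has weight~$n$ whereas $\vinc{n}=1$. For \noun{Disjoint Paths} I would encode the same conflict pattern in a graph with a central chain $c_0,c_1,\ldots,c_n$: trap pairs $P_i=(a_i,b_i)$ of weight~$1$ whose only simple route is $a_i\text{-}c_0\text{-}\cdots\text{-}c_n\text{-}b_i$ (so any two $P_i$ conflict at $c_0$), and good pairs $Q_j=(x_j,y_j)$ of weight~$1$ whose only simple route is $x_j\text{-}c_j\text{-}y_j$ (pairwise vertex-disjoint, yet each conflicting with every $P_i$ at $c_j$); then $\opt{n}=\{Q_1,\ldots,Q_n\}$ has value~$n$. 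For \noun{Knapsack} with capacity~$1$ I would use $n$ ``big'' items of size~$1$ and value~$1$ and $n$ ``small'' items of size~$1/n$ and value~$1/2$: a big item is a trap (any two do not fit together, and a big and a small item do not fit together either because $1+1/n>1$), while the $n$ small items together fill the knapsack and give $\opt{n}$ of value $n/2$.

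The verification follows the same template in every case. Greedy's step-$1$ pick can be a trap~$t$ (strictly so in the knapsack example, tied with the good elements in the other two). After placing $t$ into $\vec{S}_1$, the value $f(\vec{S}_1\cup\{x\})$ still equals $f(\{t\})$ for every remaining candidate~$x$: every other trap excludes $x$ from a common feasible subset, and every good element either conflicts with $t$ (in the IS and Disjoint Paths examples) or is too light as a singleton to beat $f(\{t\})$ and cannot be combined with $t$ either (in the knapsack example). Hence all candidates tie, and a worst-case tie-breaking rule propagates the plateau over all of steps $2,\ldots,n$. Comparing $\vopt{n}=\Omega(n)$ to $\vinc{n}=O(1)$ delivers the unbounded ratio.

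The main obstacle is precisely the tie-breaking: since $f$ is a maximum rather than a sum, no infinitesimal perturbation of weights can distinguish the candidates at steps $2$ onwards, so the bound has to be stated for greedy under a worst-case tie-breaking rule. This is the standard convention in greedy lower bounds, and should simply be made explicit at the start of the proof.
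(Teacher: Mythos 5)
Your constructions are correct, and the statement they prove is the right one, but the route differs from the paper's in a way that matters. Your instances are two-tiered (``traps'' versus ``goods''), so that after greedy's first pick every remaining candidate leaves $f$ unchanged; you then invoke adversarial tie-breaking to keep greedy in the trap set. The paper's instances are three-tiered: besides a high-value trap and the medium-value good elements, they contain a third family of \emph{tiny filler} elements (value $\varepsilon^2$, e.g.\ isolated vertices of weight $\varepsilon^2$ in \noun{Weighted Independent Set}, items of size and value $\varepsilon^2$ in \noun{Knapsack}, isolated-edge pairs of weight $\varepsilon^2$ in \noun{Disjoint Paths}) that do \emph{not} conflict with the trap. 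These fillers make greedy's gain strictly positive at every step, and the trap plus fillers sum to less than $1$ while the good elements still conflict with the trap, so the ratio diverges \emph{for any tie-breaking rule}. That buys a genuinely stronger conclusion: your argument only bounds greedy under worst-case tie-breaking (a convention the paper does adopt in Section~\ref{sub:greedy_lower}, but not needed here), whereas the paper's bounds greedy as an algorithm.

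Your closing remark that ``no infinitesimal perturbation of weights can distinguish the candidates at steps $2$ onwards'' is accurate for \emph{your} two-tier instances but misdiagnoses the obstacle as inherent: the paper's third tier is exactly the perturbation that removes the ties without rescuing greedy. If you want a tie-break-free proof, the fix is not to perturb your existing elements but to add a long plateau of strictly-profitable low-value elements that are compatible with the trap, which is what the paper does.
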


We will now define a subclass of incremental problems where the competitive ratio of greedy can be bounded.
Observe that submodularity of a function~$f\colon 2^{\solelements} \to \mathbb{R}_{\ge 0}$ implies that, for every $S \neq T \subseteq \solelements$, there exists an element~$t \in T \setminus S$ with~$f(S\cup\{t\}) - f(S) \geq (f(S \cup T) - f(S))/\left|T \setminus S\right|$.
Accordingly, we can define the following relaxation of submodularity.

\begin{definition}
 We say that $f\colon 2^{\solelements} \to \mathbb{R}_{\ge 0}$ is $\alpha$-\emph{\weaksub} for an $\alpha > 0$, if for every $S, T \subseteq \solelements$ with $T \setminus S \neq \emptyset$ there exists an element~$t \in T \setminus S$ with
 \begin{equation}\label{eq:greedycomp}
   f(S\cup\{t\}) - f(S) \geq \frac{f(S \cup T) - \alpha f(S)}{\left|T\right|}.
 \end{equation}	
\end{definition}

Thus, if $f$ is $\alpha$-\weaksub, we can improve a greedy solution if its value is more than a factor of $\alpha$ away from the value of an optimal solution. This definition is meaningful in the sense that it induces an interesting subclass of incremental problems.

\begin{lemma}
 The objective functions of \noun{Maximum (Weighted) ($b$-)Matching} and \noun{Maximum Bridge-Flow} are 2-\weaksub, but not submodular.\label{lem:weakly_submodular_problems}
\end{lemma}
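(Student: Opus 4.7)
The plan is to establish 2-augmentability separately for (Weighted) Matching, to extend the argument to its $b$-matching generalization and to Bridge-Flow, and to rule out submodularity by a small counterexample.

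For Maximum Weighted Matching I fix $S, T \subseteq E$ with $T \setminus S \neq \emptyset$, pick an optimum matching $M \subseteq S$ and an optimum matching $M^{\star} \subseteq S \cup T$, and decompose $M^{\star} = A \cup B$ with $A := M^{\star} \cap S$ and $B := M^{\star} \cap (T \setminus S)$. For each $e \in B$ with endpoints $u, v$, writing $N_M(e)$ for the at most two edges of $M$ incident to $u$ or $v$, the set $(M \setminus N_M(e)) \cup \{e\}$ is a matching inside $S \cup \{e\}$, yielding
\[
  f(S \cup \{e\}) - f(S) \;\geq\; w(e) - w(N_M(e)).
\]
The heart of the proof will be the charging inequality
\[
  \sum_{e \in B} w(N_M(e)) \;\leq\; 2 w(M) - w(A),
\]
which I intend to derive by an exchange argument: for each $a = xy \in A$, removing the (at most two) $M$-edges $m_x, m_y$ incident to $x$ or $y$ and inserting $a$ yields a matching in $S$, so optimality of $M$ forces $w(a) \leq w(m_x) + w(m_y)$ (with the convention that an absent edge has weight $0$). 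Summing over $A$ gives $w(A) \leq \sum_{m \in M} w(m) \cdot d_A(m)$, where $d_A(m)$ counts endpoints of $m$ in $V(A)$; writing $d_B(m)$ analogously for $V(B)$, the disjointness $V(A) \cap V(B) = \emptyset$ gives $d_A(m) \leq 2 - d_B(m)$, and since $\sum_{e \in B} w(N_M(e)) = \sum_m w(m)\, d_B(m)$ this rearranges to the charging inequality. Combining the per-edge bound and the charging inequality produces $\sum_{e \in B}(f(S \cup \{e\}) - f(S)) \geq f(S \cup T) - 2 f(S)$, and averaging over $B \subseteq T \setminus S$ with $|B| \leq |T|$ delivers 2-augmentability (the bound is trivial whenever $f(S \cup T) \leq 2 f(S)$).

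For Maximum $b$-Matching the same template goes through after replacing ``remove the $M$-edge at each endpoint'' by ``remove the lightest $M$-edge at each endpoint saturated to degree $b$'' and invoking the averaging $w(m_v^{\min}) \leq \frac{1}{b} \sum_{m \ni v} w(m)$ to cancel the factor $b$ introduced by $\deg_A(v) \leq b$. For Maximum Bridge-Flow I would take a maximum flow in $G_{S \cup T} = (V, E \setminus (C \setminus (S \cup T)))$, decompose it into $s$--$t$-paths (each using exactly one cut edge by the bridge property), and charge the path-flow routed through each $e \in B$ against the source-side and sink-side residual capacity in $G_S$; the factor $2$ arises because the $S$-flow can simultaneously block the source side and the sink side of any single $B$-edge.

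Non-submodularity for matching is witnessed by the underlying universe $U = \{12, 23, 34\}$ of unit-weight edges on $V = \{1, 2, 3, 4\}$: adding $34$ to $\{23\}$ produces marginal gain $0$, while adding $34$ to $\{12, 23\}$ produces marginal gain $1$, directly violating diminishing returns; an analogous shared-source gadget furnishes a bridge-flow counterexample. The main obstacle I anticipate is the charging inequality itself: the naive bound $\sum_{e \in B} w(N_M(e)) \leq 2 w(M)$ only yields $3$-augmentability, so the tightening to $2 w(M) - w(A)$ via the exchange argument together with $V(A) \cap V(B) = \emptyset$ is essential for pinning the constant at $2$. Its analogues for $b$-matching and bridge-flow demand more bookkeeping but no genuinely new idea.
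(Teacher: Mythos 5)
Your treatment of weighted matching is correct and follows a genuinely different route from the paper, and arguably a more transparent one. The paper fixes a maximum-weight $b$-matching $M_S$, defines a per-vertex quantity $l_M(v)$ (the lightest $M_S$-edge at a saturated vertex), and bounds $\sum_v b'(v)\,l_M(v)$ by $2w(M_S\setminus M_{S\cup T})$ with $b'(v)$ the degree of $v$ in $M_{S\cup T}\setminus M_S$; that counting step is stated without proof and takes a moment to verify. Your proof instead splits $M^{\star}=A\cup B$, proves the per-vertex exchange bound $w(a)\le w(m_x)+w(m_y)$ for $a=xy\in A$ by swapping $a$ into $M$, and combines it with the disjointness $V(A)\cap V(B)=\emptyset$ and $d_A(m)+d_B(m)\le 2$ to get $\sum_{e\in B}w(N_M(e))\le 2w(M)-w(A)$; I checked this chain and it is tight and correct, and the $b$-matching extension via $w(m_v^{\min})\le \tfrac{1}{b(v)}\sum_{m\ni v}w(m)$ and $d_{M^\star}(v)\le b(v)$ also goes through. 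What your approach buys is an explicit, self-contained charging identity in place of the paper's somewhat opaque per-vertex claim; what the paper's approach buys is a uniform $l_M(v)$-based statement that they carry directly into the $b$-matching case without the extra averaging step. Your non-submodularity witness (the length-three path) is exactly the paper's.

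The bridge-flow case is where I see a real gap. You propose to take a max flow $f^\ast$ in $G_{S\cup T}$, decompose it into $s$-$t$-paths, and ``charge the path-flow through each $e\in B$ against the source-side and sink-side residual capacity in $G_S$,'' with the factor $2$ coming from blocking both sides. But a path of $f^\ast$ through $e\in B$ need not be feasible in $G_{S\cup\{e\}}$ just because you account for $f_S$'s edge usage on either side; flows cannot be combined by naively summing path-by-path, and the quantity you would need to bound is not the overlap of edge sets but the value of a max flow in an auxiliary network. The paper's proof goes through the residual graph $G^S_{S\cup T}$ of $f_S$: the residual max flow $f_r$ has value $f(S\cup T)-f(S)$, paths of $f_r$ that reverse across the directed cut are bounded in total by the backward cut residual capacity $\val(f_S)=f(S)$, and the surviving sub-flow $f^{nb}_r$ of value at least $f(S\cup T)-2f(S)$ decomposes into paths that each cross the cut exactly once, necessarily through a single $T\setminus S$ edge, and are individually addable to $f_S$. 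That residual decomposition is the key idea; the factor $2$ arises from subtracting $f(S)$ twice (once as the value of $f_S$, once as the backward cut capacity), not from a source-side/sink-side double-count. As written, your sketch does not supply this mechanism, so calling it ``more bookkeeping but no genuinely new idea'' understates what is missing.
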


We now show the first part of Theorem~\ref{thm:greedy}.

\begin{theorem}
If the objective function of an incremental problem is $\alpha$-\weaksub, the greedy algorithm is
$\alpha \frac{e^\alpha}{e^\alpha - 1}$-competitive.
\end{theorem}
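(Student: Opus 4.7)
The plan is to fix an arbitrary target cardinality~$k$, derive via $\alpha$-augmentability a linear recurrence on~$\vinc{j}$ for $j \in \{0, 1, \dots, k\}$, and then specialize $j = k$ to extract the competitive ratio.

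At step~$j$ I would apply \eqref{eq:greedycomp} with $S := \inc{j-1}$ and $T := \opt{k}$. If $\opt{k} \subseteq \inc{j-1}$, monotonicity already yields $\vinc{j-1} \geq \vopt{k}$ and the bound is trivial; otherwise there exists $t \in \opt{k} \setminus \inc{j-1}$ with
\[ f(\inc{j-1} \cup \{t\}) - \vinc{j-1} \geq \frac{f(\inc{j-1} \cup \opt{k}) - \alpha\vinc{j-1}}{k}. \]
Replacing $f(\inc{j-1} \cup \opt{k})$ by $\vopt{k}$ via monotonicity and noting that greedy picks the single-step maximizer gives the recurrence
\[ \vinc{j} \geq \left(1 - \frac{\alpha}{k}\right)\vinc{j-1} + \frac{\vopt{k}}{k}. \]

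To solve it, I would introduce the substitution $g_j := \vopt{k} - \alpha\vinc{j}$, which collapses the two-term recurrence into the clean contraction $g_j \leq (1 - \alpha/k)\, g_{j-1}$. Unrolling from $g_0 = \vopt{k}$ yields $g_k \leq (1 - \alpha/k)^k\,\vopt{k} \leq e^{-\alpha}\,\vopt{k}$, where the last inequality uses $1 + x \leq e^x$ (valid in the regime $k \geq \alpha$). Rearranging gives $\vinc{k} \geq \frac{1 - e^{-\alpha}}{\alpha}\,\vopt{k} = \frac{e^\alpha - 1}{\alpha\, e^\alpha}\,\vopt{k}$, \ie\ competitive ratio at most $\alpha\, e^\alpha/(e^\alpha - 1)$. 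The remaining regime $k < \alpha$ is handled by Lemma~\ref{lem:decreasing_density}, which gives $\vinc{k} \geq \vinc{1} = \vopt{1} \geq \vopt{k}/k > \vopt{k}/\alpha$; this already beats the required ratio since $\alpha\, e^\alpha/(e^\alpha - 1) > \alpha$.

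The main obstacle is identifying the substitution $g_j := \vopt{k} - \alpha\vinc{j}$, which precisely absorbs the $\alpha$-factor from~\eqref{eq:greedycomp} into a single geometric factor~$(1 - \alpha/k)$. Once set up, the argument mirrors the classical Nemhauser--Wolsey--Fisher proof for submodular functions~($\alpha = 1$), with the relaxation neatly surfacing in the exponent.
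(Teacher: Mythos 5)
Your proof is correct and in fact yields the identical finite-$k$ inequality $\vinc{k} \geq \vopt{k}\bigl(1 - (1-\alpha/k)^k\bigr)/\alpha$ that the paper obtains before passing to the $e^\alpha$ limit; the two arguments differ only in how the recurrence is solved. Both apply $\alpha$-augmentability with $S = \inc{j-1}$, $T = \opt{k}$ at each greedy step to derive the same linear one-step relation. The paper then introduces the marginals $p_i := \vinc{i} - \vinc{i-1}$, writes $\vopt{k} = \alpha\vinc{k} + \beta$, proves the closed form $p_i \geq \frac{\beta}{k-\alpha}\bigl(\frac{k}{k-\alpha}\bigr)^{k-i}$ by a separate downward induction, and sums the resulting geometric series; your substitution $g_j := \vopt{k} - \alpha\vinc{j}$ collapses the two-term recurrence directly into the contraction $g_j \leq (1-\alpha/k)\,g_{j-1}$, so the geometric sum appears immediately on unrolling (indeed, one checks that the paper's combination $\beta + \alpha\sum_{j>i}p_j$ is precisely your $g_i$). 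Your version is more economical and avoids the auxiliary induction. Two minor points worth stating explicitly: the base case is $g_0 \leq \vopt{k}$ rather than $g_0 = \vopt{k}$ unless one normalizes $f(\emptyset)=0$ (the paper implicitly does so when writing $\vinc{k} = \sum_i p_i$), but the inequality direction is the one you need, so nothing breaks; and composing the contraction requires the factor $1-\alpha/k$ to be non-negative, i.e.\ $k \geq \alpha$, which you correctly isolate and handle via Lemma~\ref{lem:decreasing_density} (the paper handles $k \leq \alpha$ via sub-additivity; both are valid).
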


\begin{proof}
 Let $\inc{i}$ be our greedy incremental solution after $i$ elements have been added.
Let us focus on an arbitrary cardinality $k > \alpha$, and say that for this cardinality
we have $\vopt{k} = \alpha\vinc{k} + \beta$, for some $\beta > 0$. For cardinalities $k \leq \alpha$, note that $\vopt{1} = \vinc{1}$ and that $\vopt{k} \leq k \vopt{1} \leq \alpha \vopt{1} \leq \alpha \vinc{k}$ due to sub-additivity. We will
show that we must have $\beta \leq \alpha\vinc{k}/(e^\alpha-1)$, which proves the theorem for cardinalities $k > \alpha$.

First, let us define $\profit{k} := \vinc{k} - \vinc{k-1}$ to be the additional value obtained by adding
the $k$-th element to our greedy solution $\inc{k}$. We claim that 
\begin{equation}\label{eq:profit-recursive}
 p_{i} \geq \frac{\beta + \alpha\sum_{j=i+1}^{k} p_j}{k-\alpha} \quad \text{for any positive integer $i < k$, and }\quad p_k \geq \frac{\beta}{k-\alpha}\ .
\end{equation} 
To prove \eqref{eq:profit-recursive}, we apply \eqref{eq:greedycomp} for $S = \inc{i-1}$ and $T = \opt{k}$, which guarantees the existence of a $t \in \opt{k} \setminus \inc{i-1}$ such that
\[ 
 f(\inc{i-1} \cup \set{t}) - \vinc{i-1} \geq \frac{f(\inc{i-1}\cup \opt{k}) - \alpha\vinc{i-1}}{k} \geq \frac{\vopt{k} - \alpha\vinc{i-1}}{k} \geq \frac{\beta + \alpha\sum_{j=i}^{k}p_j}{k}.
\]
The last inequality holds since we assume $\vopt{k} = \alpha\vinc{k} + \beta$ and for any $i \leq k$ we know that $\vinc{k} - \vinc{i-1} = \sum_{j=i}^{k} p_j$.
Since we construct $\inc{i}$ greedily, we have $f(\inc{i}) \geq f(\inc{i-1} \cup \set{t})$ and thus $\profit{i} \geq (\beta + \alpha\sum_{i \leq j \leq k} p_j)/k$. Rearranging to isolate $\profit{i}$ we get the bounds in \eqref{eq:profit-recursive}. Next, we claim that
\begin{equation}
\label{eq:profit-close}
p_i \geq \frac{\beta}{k-\alpha} \cdot \left(\frac{k}{k-\alpha}\right)^{k-i} \quad \text{for all $i \leq k$}.
\end{equation}
We prove this by induction for decreasing values of~$i$. The induction base for $i=k$ follows directly from \eqref{eq:profit-recursive}. 
For the induction step, we assume that the formula holds for all $p_j$ with $j \in \{i+1,\dots,k\}$. 
By \eqref{eq:profit-recursive} and the inductive hypothesis, we have
\begin{multline}
\label{eq:profits-closed}
(k-\alpha)\cdot p_{i} \geq \beta + \sum_{j=i+1}^{k} \alpha p_j 
           \geq \beta + \alpha\sum_{j=i+1}^{k} \frac{\beta}{k-\alpha} \cdot \left(\frac{k}{k-\alpha}\right)^{k-j} \\
	  	   = \beta + \frac{\alpha\beta}{k-\alpha}\sum_{j=0}^{k-i-1} \left(\frac{k}{k-\alpha}\right)^{j}
		   = \beta + \frac{\alpha\beta}{k-\alpha} \cdot \frac{(\frac{k}{k-\alpha})^{k-i} - 1}{\frac{k}{k-\alpha}-1}
		   = \beta \cdot \left(\frac{k}{k-\alpha}\right)^{k-i},
\end{multline}
which shows that the formula also holds for $p_{i}$, and thus completes the induction step.

We are now ready to prove the theorem. 
Recall that we assumed $\vopt{k} = \alpha\vinc{k} + \beta$ and want to show that $\beta < \alpha\vinc{k}/(e^\alpha-1)$.
We have
\[
 \vinc{k} = \sum_{i=1}^{k} p_i \stackrel{\eqref{eq:profit-close}}{\geq} \frac{\beta}{k-\alpha} \sum_{i=1}^{k} \left(\frac{k}{k-\alpha}\right)^{k-i} = 
\frac{\beta}{k-\alpha} \sum_{i=0}^{k-1} \left(\frac{k}{k-\alpha}\right)^{i} = \frac{\beta}{k-\alpha} \frac{(\frac{k}{k-\alpha})^{k} - 1}{\frac{k}{k-\alpha}-1}
\]
which can be rearranged to
\[
 \beta \leq \frac{(k-\alpha)\left(\frac{k}{k-\alpha}-1\right)\vinc{k}}{(\frac{k}{k-\alpha})^{k} - 1} = \frac{\alpha\vinc{k}}{(\frac{k}{k-\alpha})^{k} - 1}  \leq \frac{\alpha\vinc{k}}{e^\alpha - 1}
\]
The last inequality holds since for any $x > 0$ we have $(1 + 1/x)^{x+1} \geq e$, and thus for~$x=k/\alpha-1$ it follows that
$ e \leq \left(1+\frac{1}{\frac{k}{\alpha}-1}\right)^{\frac{k}{\alpha}} 
    = \left(1+\frac{\alpha}{k-\alpha}\right)^{\frac{k}{\alpha}} 
		= \left(\frac{k}{k - \alpha}\right)^{\frac{k}{\alpha}} 
$ and  thus~$(\frac{k}{k-\alpha})^k \geq e^\alpha$.

.
\end{proof}

\subsection{Lower bound}\label{sub:greedy_lower}

We now show the second part of Theorem~\ref{thm:greedy}, i.e., we show a matching lower bound on the competitive ratio of the greedy algorithm.
We show this by constructing the following family of instances for the \IMF problem.
For $k \in \mathbb{N}$, we define a graph $G_k = (V_k,E_k)$ with designated nodes $s$ and $t$ by
\ama
 V_k :=&\ \set{s,t} \cup \setc{v^1_i,v^4_i}{i=1,\dots,2k} \cup \setc{v^2_i,v^3_i}{i=1,\dots,4k},\\
 E_k :=&\ E^1_k \cup E^\infty_k \cup \bigcup_{i=1}^{2k} E_{k,i} \cup \bigcup_{i=1}^{2k} E'_{k,i},\\		
 E^1_k      &:= \setc{(s,v^2_i),(v^3_{3k+i},t)}{i=1,\dots,k},\\
 E^\infty_k &:= \setc{(s,v^2_{3k+i}),(v^2_i,v^3_i),(v^2_{3k+i},v^3_{3k+i}),(v^3_{i},t)}{i=1,\dots,k},\\
 E_{k,i}    &:= \set{(s,v^1_i),(v^1_i,v^2_{k+i}),(v^2_{k+i},v^3_{k+i}),(v^3_{k+i},v^4_i),(v^4_i,t)} \quad \text{ for all } i=1,\dots,2k,\\
 E'_{k,i}    &:= \setc{(v^1_i,v^2_j),(v^3_{3k+j},v^4_{i})}{j=1,\dots,k} \quad \text{ for all } i=1,\dots,2k.\\
\ema
The edge capacities $u_k\colon E_k \to \mathbb{R}_{\geq 0}$ are given by $u_k(e)=(\frac{k}{k-1})^{2k+1-i}$ for $e \in E_{k,i}$, by $u_k(e)=\frac{1}{k}(\frac{k}{k-1})^{2k+1-i}$ for $e \in E'_{k,i}$, by $u_k(e)=1$ for $e\in E^1_k$, and by $u_k(e)=\infty$ for $e\in E^\infty_k$.

For every $G_k$, we choose a directed $s$-$t$-cut $C_k := \setc{(v^2_i,v^3_i)}{i=1,\dots,4k}$. 
\Wlog, we will assume in the following that we can resolve all ties in the \Greedy algorithm to our preference. This can be done formally by adding some very small offsets to the edge weights, but we omit this for clarity. 
Now consider how the \Greedy algorithm operates on graph $G_k$.

\begin{restatable}{lemma}{greedybridgeflow}
    In step~$j\in\{1,\dots,2k\}$, the greedy algorithm picks edge~$(v_{k+j}^2,v_{k+j}^3)$.\label{lem:greedy_bridgeflow}
\end{restatable}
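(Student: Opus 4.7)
The plan is to prove the lemma by induction on $j$. Let $c_l:=\left(\tfrac{k}{k-1}\right)^{2k+1-l}$ for $l\in\{1,\dots,2k\}$, which is both the capacity of the cut edge $(v_{k+l}^2,v_{k+l}^3)$ and of every other edge of $E_{k,l}$. The inductive hypothesis is that before step~$j$ the greedy solution satisfies $\inc{j-1}=\{(v_{k+i}^2,v_{k+i}^3):i=1,\dots,j-1\}$ and $\vinc{j-1}=\sum_{i=1}^{j-1}c_i$, realized by the pairwise edge-disjoint $s$--$t$ paths $P_i:s\to v_i^1\to v_{k+i}^2\to v_{k+i}^3\to v_i^4\to t$, each carrying $c_i$ units. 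The goal of the inductive step is to upper-bound the max flow after adding any single cut edge $e\in C_k\setminus\inc{j-1}$ and to show that the bound $\sum_{i=1}^{j-1}c_i+c_j$ is attained by $e=(v_{k+j}^2,v_{k+j}^3)$, so that the ``tie-breaking to our preference'' convention forces greedy to pick it. The base case $j=1$ falls out of the same three-case analysis.

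I would split $C_k\setminus\inc{j-1}$ into three parts according to where the cut edge $(v_i^2,v_i^3)$ sits in the construction: $i\in\{k+1,\dots,3k\}$ (\emph{middle}), $i\in\{1,\dots,k\}$ (\emph{left}), or $i\in\{3k+1,\dots,4k\}$ (\emph{right}).

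\textbf{Case 1 (middle),} $e=(v_{k+l}^2,v_{k+l}^3)$ with $l\geq j$. The vertex $v_{k+l}^2$ has $(v_l^1,v_{k+l}^2)$ as its unique incoming edge and $v_{k+l}^3$ has $(v_{k+l}^3,v_l^4)$ as its unique outgoing edge, so the only route through the new bridge is $P_l$. Since $P_l$ is edge-disjoint from $P_1,\dots,P_{j-1}$, the new max flow equals $\sum_{i=1}^{j-1}c_i+c_l$, which is maximized at $l=j$ with value $\sum_{i=1}^{j-1}c_i+c_j$.

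\textbf{Case 2 (left),} $e=(v_i^2,v_i^3)$ with $i\in\{1,\dots,k\}$. I would set up a max-flow LP in the variables $x_a$ (flow on $(v_a^1,v_{k+a}^2)$, useful only for $a\leq j-1$), $y_a$ (flow on $(v_a^1,v_i^2)$ for $a\in\{1,\dots,2k\}$), and $z$ (flow on $(s,v_i^2)$). Every $s$--$t$ path goes either through an old bridge (contributing $x_a$) or through the new one (contributing $z+\sum_a y_a$). The capacities of $(s,v_a^1)$ give $x_a+y_a\leq c_a$, the $E'_{k,a}$ edges give $y_a\leq c_a/k$, and $(s,v_i^2)$ gives $z\leq 1$; the output capacities $(v_a^4,t)$ and $(v_i^3,t)$ are non-binding. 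Maximizing the LP yields
\[
\text{max flow}\;\leq\;\sum_{a=1}^{j-1}c_a+\sum_{a=j}^{2k}\tfrac{c_a}{k}+1.
\]
The geometric-series identity $\sum_{a=j}^{2k}\tfrac{c_a}{k}=\tfrac{1}{k}\sum_{b=1}^{2k+1-j}\left(\tfrac{k}{k-1}\right)^{b}=c_j-1$ collapses this to $\sum_{a=1}^{j-1}c_a+c_j$, matching Case~1.

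\textbf{Case 3 (right),} $e=(v_{3k+l}^2,v_{3k+l}^3)$ with $l\in\{1,\dots,k\}$. I would invoke the source--sink symmetry of $G_k$: the roles of $(s,v_i^2)$ and $(v_a^1,v_i^2)$ in Case~2 are mirrored by $(v_{3k+l}^3,t)$ and $(v_{3k+l}^3,v_a^4)$, so the analogous LP, now with the capacities $(v_a^4,t)$ becoming binding on the output side, gives the same bound $\sum_{a=1}^{j-1}c_a+c_j$.

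Combining the three cases, every cut edge yields a new max flow of at most $\sum_{i=1}^{j-1}c_i+c_j$ with equality attained by $(v_{k+j}^2,v_{k+j}^3)$; by the tie-breaking convention greedy picks it, and the new max flow is then $\sum_{i=1}^{j}c_i$, which closes the induction. The main obstacle will be Case~2 (and its mirror Case~3): one has to argue carefully that re-routing flow originally carried along $P_a$ for $a<j$ through the newly opened bridge gives no net throughput gain (equivalently, that the LP bound $x_a+y_a\leq c_a$ is tight at the optimum), and then to collapse the geometric series $1+\sum_{a=j}^{2k}(c_a/k)=c_j$ without arithmetic error.
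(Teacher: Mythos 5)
Your proof is correct and follows essentially the same route as the paper's: induction on $j$, a three-way split of the candidate cut edge (middle with index $\geq j$, left, right), a direct argument for the middle case using edge-disjointness of the paths $P_a$, and the geometric-series identity $1 + \frac{1}{k}\sum_{b=1}^{2k+1-j}\left(\frac{k}{k-1}\right)^{b} = \left(\frac{k}{k-1}\right)^{2k+1-j}$ to close the left/right cases. The only stylistic difference is in Cases 2 and 3: the paper bounds the \emph{gain} $f(\inc{j-1}\cup\{e\})-f(\inc{j-1})$ by the residual incoming (resp.\ outgoing) capacity at the newly opened cut vertex after accounting for the already-saturated arcs $(s,v^1_a)$, $(v^4_a,t)$ for $a<j$, whereas you bound the \emph{total} post-addition max flow directly via an LP/cut formulation with variables $x_a,y_a,z$. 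The two arguments are equivalent by max-flow/min-cut, and your version is marginally more explicit about why the bound is valid independently of which maximum flow one references, but neither buys a genuinely different idea.
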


With this, we are ready to show the following result, which, together with Lemma~\ref{lem:weakly_submodular_problems}, implies the second part of Theorem~\ref{thm:greedy}.

\begin{theorem}
    The greedy algorithm has competitive ratio at least~$\frac{2e^2}{e^2-1}\approx 2.313$ for \noun{Maximum Bridge-Flow}.
\end{theorem}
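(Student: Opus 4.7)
The plan is to evaluate both the greedy solution and a well-chosen competing cardinality-$2k$ solution on the family $\{G_k\}$, and to show that for cardinality $2k$ their flow values are in a ratio that tends to $\frac{2e^2}{e^2-1}$ as $k\to\infty$. Since the competitive ratio is the supremum over all instances and cardinalities, this will establish the claimed lower bound.

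First, by Lemma~\ref{lem:greedy_bridgeflow}, after $2k$ steps the greedy solution $\inc{2k}$ contains exactly the middle cut edges $(v^2_{k+j},v^3_{k+j})$ for $j=1,\dots,2k$. I would then argue that, with no other cut edges active, every $s$-$t$-path must traverse one of these middle cut edges, and the only such paths are the direct ones $s\to v^1_i\to v^2_{k+i}\to v^3_{k+i}\to v^4_i\to t$: any path through $E'_{k,i}$ would need to cross a cut edge $(v^2_j,v^3_j)$ or $(v^2_{3k+j},v^3_{3k+j})$ with $j\leq k$, and none of these are picked by greedy. Setting $r:=k/(k-1)$, each such path has bottleneck capacity $r^{2k+1-i}$ and the $2k$ paths are edge-disjoint, so
\[
f(\inc{2k})=\sum_{i=1}^{2k}r^{2k+1-i}=r\cdot\frac{r^{2k}-1}{r-1}=k\bigl(r^{2k}-1\bigr).
\]

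Next, I would exhibit the competitor $S^{\circ}$ of cardinality $2k$ that picks the infinite-capacity cut edges $(v^2_i,v^3_i)$ and $(v^2_{3k+i},v^3_{3k+i})$ for $i=1,\dots,k$, and lower-bound $\vopt{2k}\geq f(S^{\circ})$ by constructing a feasible flow from four families of paths: (i) $s\to v^2_i\to v^3_i\to t$ for $i\leq k$, contributing $k$ units through the capacity-$1$ edges $(s,v^2_i)$; (ii) $s\to v^1_i\to v^2_j\to v^3_j\to t$ for $i\leq 2k$, $j\leq k$, which routes $\frac{1}{k}r^{2k+1-i}$ units along each cross edge and sums over $j$ to saturate $(s,v^1_i)$ exactly, contributing $\sum_{i=1}^{2k}r^{2k+1-i}=k(r^{2k}-1)$; (iii) $s\to v^2_{3k+j}\to v^3_{3k+j}\to t$ for $j\leq k$, contributing another $k$ via the capacity-$1$ edges $(v^3_{3k+j},t)$; and (iv) $s\to v^2_{3k+j}\to v^3_{3k+j}\to v^4_i\to t$, which routes $\frac{1}{k}r^{2k+1-i}$ per cross edge and sums over $j$ to saturate $(v^4_i,t)$, contributing $k(r^{2k}-1)$. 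A careful check shows that these four families use pairwise disjoint sets of finite-capacity edges and only share infinite-capacity edges, so they can be routed simultaneously, giving $\vopt{2k}\geq f(S^{\circ})=2k+2k(r^{2k}-1)=2kr^{2k}$.

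Combining these two bounds yields
\[
\frac{\vopt{2k}}{f(\inc{2k})}\;\geq\;\frac{2kr^{2k}}{k(r^{2k}-1)}\;=\;\frac{2}{1-r^{-2k}}\;\xrightarrow{k\to\infty}\;\frac{2e^{2}}{e^{2}-1},
\]
using $\lim_{k\to\infty}(k/(k-1))^{2k}=e^{2}$. The main obstacle I foresee is the capacity bookkeeping in steps~(ii) and~(iv): one must verify that summing the cross-edge capacities $\frac{1}{k}r^{2k+1-i}$ over the appropriate index yields exactly $r^{2k+1-i}$, so that $(s,v^1_i)$ and $(v^4_i,t)$ become tight but feasible simultaneously, thereby unlocking the second copy of $k(r^{2k}-1)$ units of flow that greedy cannot route because it failed to pick any infinite-capacity cut edge.
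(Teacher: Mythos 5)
Your proposal is correct and follows essentially the same route as the paper: use Lemma~\ref{lem:greedy_bridgeflow} to identify greedy's picks after $2k$ steps, exhibit the alternative cardinality-$2k$ cut set $\{(v^2_i,v^3_i),(v^2_{3k+i},v^3_{3k+i}):i\leq k\}$, and take the ratio as $k\to\infty$; your explicit path decompositions and capacity bookkeeping simply flesh out the flow-value computations that the paper states without justification.
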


\begin{proof}
    By Lemma~\ref{lem:greedy_bridgeflow}, the greedy algorithm picks the edges $(v^2_{k+1},v^3_{k+1}), \dots, (v^2_{3k},v^3_{3k})$ in the first~$2k$ steps.
    Thus, after step~$2k$, greedy can send an $s$-$t$-flow of value
\ama
 \sum_{i=1}^{2k} \left(\frac{k}{k-1}\right)^{i} = \left(\frac{\left(\frac{k}{k-1}\right)^{2k+1}-1}{\frac{k}{k-1}-1}-1\right) = (k-1)\left(\frac{k}{k-1}\right)^{2k+1}-k\ .
\ema 

On the other hand, the solution of size~$2k$ consisting of the edges~$(v^2_{1},v^3_{1}), \dots, (v^2_{k},v^3_{k})$, and $(v^2_{3k+1},v^3_{3k+1}), \dots, (v^2_{4k},v^3_{4k})$ results in an (optimal) flow value of
\ama
 2k+2\sum_{i=1}^{2k} \left(\frac{k}{k-1}\right)^{i} = 2(k-1)\left(\frac{k}{k-1}\right)^{2k+1}.
\ema 
This corresponds to a competitive ratio of 
\ama
 \frac{2(k-1)\left(\frac{k}{k-1}\right)^{2k+1}}{(k-1)\left(\frac{k}{k-1}\right)^{2k+1}-k} = \frac{2\left(\frac{k}{k-1}\right)^{2k}}{\left(\frac{k}{k-1}\right)^{2k}-1} = \frac{2\left(\frac{k}{k-1}\right)^{2(k-1)+2}}{\left(\frac{k}{k-1}\right)^{2(k-1)+2}-1}\ .
\ema
Substituting~$x := k-1$ and using the identity~$\lim_{x \to \infty} (1+1/x)^x = e$, we get the lower bound on the competitive ratio of the greedy algorithm claimed in Theorem~\ref{thm:greedy} in the limit:
\[
\lim_{x \to \infty} \frac{2\left(\frac{x+1}{x}\right)^{2x}\left(\frac{x+1}{x}\right)^2}{2\left(\frac{x+1}{x}\right)^{2x}\left(\frac{x+1}{x}\right)^2-1}
= \lim_{x \to \infty} \frac{2e^2\left(\frac{x+1}{x}\right)^2}{2e^2\left(\frac{x+1}{x}\right)^2-1}
= \frac{2e^2}{e^2 - 1}\ .
\qedhere\]
\end{proof}

\section{Conclusion}

We have defined a formal framework that captures a large class of incremental problems and allows for incremental solutions with bounded competitive ratio.
We also defined a new and meaningful subclass consisting of problems with $\alpha$-\weaksub objective functions for which the greedy algorithm has a bounded competitive ratio.
Hopefully our results can inspire future work on incremental problems from a perspective of competitive analysis. 

Obvious extensions of our results would be to close the gap between our bounds of~$2.618$ and~$2.18$ for the best-possible competitive ratio of incremental algorithms. 
In particular, it would be interesting whether or not the bound of~$2.313$ for the greedy algorithm in the $2$-\weaksub setting can be beaten by some other incremental algorithm already in the general setting.
Also, the $\alpha$-\weaksub and strictly submodular settings may allow for better incremental algorithms than the greedy algorithm.
Another question is whether abandoning the accountability condition yields an interesting class of problems.
Finally, it may be possible to generalize our framework to problems with a continuously growing budget and a cost associated with each element, instead of a growing cardinality constraint.
 
\subparagraph*{Acknowledgements}
We wish to thank Andreas B\"artschi and Daniel Graf for initial discussions and the the general idea that lead to the algorithm of Section~\ref{sec:inc_algo}.
We are also grateful for the detailed comments provided by an anonymous reviewer.

\pagebreak

\bibliography{literature}

\iftrue
\newpage
\appendix

\section{Proofs of Section~\ref{sec:lower_bound}}

\obsregionchoosing*

\begin{proof}
   The objective function of \noun{Region Choosing} is monotone by definition.
   
   Let~$S,T$ be two solutions to an instance of the \noun{Region Choosing} problem, and consider the region of maximum value in the solution~$S \cup T$.
   Let $v_X$ denote the total value of the items from this region in solution~$X$.
   Then,
   \[ f(S\cup T) = v_{S \cup T} \leq v_S + v_T \leq f(S) + f(T), \]
   which proves sub-additivity.
   
   To show accountability, let $R_S$ be the region of maximum value in solution~$S$.
   If $S \setminus R_S = \emptyset$, then, for every~$s\in S$, we have
   \[ f(S\setminus\{s\}) = f(S) - f(S)/|S|.\]
   Otherwise, for every~$s \in S\setminus R_S$, we have
   \[ f(S\setminus\{s\}) = f(S) \geq f(S) - f(S)/|S|.\]
\end{proof}

\lemrhobeta*

\begin{proof}
We fix a problematic pair~$(\rho,\beta)$ and let~$\eps$ be as in Definition~\ref{def:problematic}.
Consider a \mbox{$\beta$-de}creasing instance of sufficiently large size~$\noregions$ and assume that there is a $\rho$-competitive incremental solution for this instance, given by the sequence~$k_0,k_1,\dots,k_\maxpicks$.
Consider a cardinality $k = \sum_{j=0}^i k_j$ for any~$i \in \{1,\dots,\noregions\}$ for which the incremental solution takes all elements from regions $R_{k_0}$, $\dots$, $R_{k_i}$. 
Assume that we do not take any additional elements for larger cardinalities.
We are interested in the first cardinality for which $\rho$-competitiveness would be violated, i.e., where $\vallb{k_i}$ is not enough to be $\rho$-competitive. 
This is the minimal cardinality $t_i$ with $\vopt{\keytime_i} = t_i^\beta > \rho k_i^\beta = \rho \vallb{k_i}$, i.e.,
\begin{equation*}
 \keytime_i := \min \left\{t_i \in \N\ \left|\ t_i > \rho^{\frac{1}{\beta}} k_i\right.\right\} \leq \rho^{\frac{1}{\beta}}k_i + 1.
\end{equation*}	
Then, for cardinality~$\keytime_i$, the incremental solution must have taken enough value from a later region to be $\rho$-competitive. 
Without loss of generality, we can assume this region to be~$R_{k_{i+1}}$, otherwise the incremental solution that skips region~$R_{k_{i+1}}$ is also $\rho$-competitive, and we can consider this solution instead.
It follows that the incremental solution must satisfy
\[
 \left(\keytime_i - \sum_{j=0}^i k_j\right) \densityr{k_{i+1}} \geq \frac{1}{\rho} \vallb{t_i} > k_i^\beta,\]
 which, by definition of~$\alpha_i$, implies 
\begin{equation}
\left(\rho^{\frac{1}{\beta}} k_i + 1 - \far_i k_i\right) \densityr{k_{i+1}} > k_i^\beta.
\label{eq:beta_decreasing_cond1}
\end{equation}
Defining~$\ratio_i := k_i / k_{i-1}$ for all~$i\in\{1,\dots,\noregions\}$ gives~$\densityr{k_{i+1}} = \densityr{\ratio_{i+1} k_i} = \densityr{k_i} \cdot \ratio_{i+1}^{\beta-1}$. 
With this, eq.~\eqref{eq:beta_decreasing_cond1} can be written as
\[
\vallb{k_i} = k_i^\beta < \left(\rho^{\frac{1}{\beta}} + \frac{1}{k_i} - \far_i\right) \ratio_{i+1}^{\beta-1} \densityr{k_i} k_i
	= \left(\rho^{\frac{1}{\beta}} + \frac{1}{k_i} - \far_i\right) \ratio_{i+1}^{\beta-1} \vallb{k_i}.\]
Dividing by~$\vallb{k_i}$ yields
\begin{equation} 
   \label{eq:beta_decreasing_cond2}
   \ratio_{i+1} < \left(\rho^{\frac{1}{\beta}} + \frac{1}{k_i} - \far_i\right)^{\frac{1}{1-\beta}}.
\end{equation}
Since the incremental solution is~$\rho$-competitive, we have~$\far_i \in (1,\rho^{\frac{1}{\beta}}]$ by condition~\eqref{eq:rho_condition}.
Because $(\rho, \beta)$ is a problematic pair, we have
\begin{equation}
   \function(\far_i) = (\rho^{\frac{1}{\beta}} +\eps - \far_i)^{\frac{1}{1-\beta}} - \frac{\far_i}{\far_i-1+\eps} < 0.
   \label{eq:beta_decreasing_cond3}
\end{equation}
Observe that since $\lim_{i\to\infty} k_i = \infty$, we can find an $i^\star$ such that $\frac{1}{k_i} < \eps$ for all~$i\geq i^\star$.
Thus, for~$i\geq i^\star$ eqs.~\eqref{eq:beta_decreasing_cond3} and \eqref{eq:beta_decreasing_cond2} imply
\[ \ratio_{i+1} < (\rho^{\frac{1}{\beta}} +\eps - \far_i)^{\frac{1}{1-\beta}}  < \frac{\far_i}{\far_i - 1 + \eps}.\]
For~$i\geq i^\star$, we therefore get
\begin{align*}
   \far_{i+1} - \far_{i} &= \frac{1}{k_{i+1}} \sum_{j=0}^{i+1} k_j - \far_i \\
                         &= 1 + \frac{\far_i}{q_{i+1}} - \far_i\\
                         &> \eps.
\end{align*}
But this implies that, if~$\noregions$ is sufficiently large, condition~\eqref{eq:rho_condition} eventually gets violated, which contradicts that the incremental solution has competitive ratio~$\rho$.
\end{proof}

\section{Proofs of Section~\ref{sec:greedy}}

\unboundedgreedy*

\begin{proof}
   We construct a knapsack instance with three types of items for small~$\varepsilon>0$ and any~$k\in\mathbb{N}$: One item of size and value both $1-\varepsilon$, $k$ items of size~$2\varepsilon$ and value~$1-2\varepsilon$, and $k$ items of size and value both~$\varepsilon^2$.
   Obviously, the greedy algorithm first takes the largest item and then continues with the smallest one, since each of them further increases the maximum value that can be packed in the knapsack of capacity~$1$.
   Consequently, the greedy solution has value below~$1$ for cardinality~$k$, while the optimum value approaches~$k$.

  We can reproduce the same behavior for the \noun{Weighted Independent Set} problem by choosing a star of degree~$k$ plus $k$ isolated vertices as our input graph, where the center of the star has weight~$1-\varepsilon$, the leaves of the star have weight~$1-2\varepsilon$, and the isolated vertices have weight~$\varepsilon^2$.
  
 Similarly, for \noun{Disjoint Paths}, we can choose a long path and many isolated edges as input.
 The endpoints of the path form a pair of weight~$1-\varepsilon$, each edge along the path is a pair of weight~$1-2\varepsilon$, and each isolated edge has weight~$\varepsilon^2$.
\end{proof}

We separately show the four parts of Lemma~\ref{lem:weakly_submodular_problems}.

\begin{lemma}
 The objective function of \noun{Maximum Weighted $b$-Matching} is 2-\weaksub.
\end{lemma}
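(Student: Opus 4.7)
The plan is to prove the 2-\weaksub inequality by an averaging argument over the edges of an optimum $b$-matching. I would fix an optimal $b$-matching $M^*$ in $S \cup T$ and an optimal $b$-matching $M_S$ in $S$, set $M^*_T := M^* \cap (T \setminus S)$, and note $|M^*_T| \leq |T|$. The claim is trivial whenever $f(S \cup T) \leq 2 f(S)$ (by monotonicity of $f$), so assume $f(S \cup T) > 2 f(S)$. Averaging then reduces the claim to showing
\[
  \sum_{t \in M^*_T}\bigl(f(S \cup \{t\}) - f(S)\bigr) \;\geq\; f(S \cup T) - 2 f(S),
\]
since dividing by $|M^*_T| \leq |T|$ produces some $t \in M^*_T \subseteq T \setminus S$ attaining at least the average, which is precisely the required inequality.

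To lower-bound each $\Delta(t) := f(S \cup \{t\}) - f(S)$, I would construct an explicit $b$-matching in $S \cup \{t\}$ by augmenting $M_S$ along the alternating structure of $M_S \triangle M^*$ through $t = (u,v)$. The coarse construction $M_S \cup \{t\} \setminus \{e_u(t), e_v(t)\}$, where $e_u(t)$ and $e_v(t)$ are chosen to be $M_S$-edges at $u$ and $v$ only when those endpoints are $M_S$-saturated, gives $\Delta(t) \geq w(t) - w(e_u(t)) - w(e_v(t))$. The refinement is to choose these conflict edges through a decomposition of $M_S \triangle M^*$ into alternating paths and cycles: for each component $P$ carrying at least one $M^*_T$-edge, one charges each $t \in M^*_T \cap P$ to a sub-alternating-walk of $P$ containing $t$ and no other $M^*_T$-edge, obtaining a $b$-matching in $S \cup \{t\}$ whose weight exceeds $w(M_S)$ by the local gain of that sub-walk. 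Summing over all $t \in M^*_T$ yields
\[
  \sum_{t \in M^*_T} \Delta(t) \;\geq\; \sum_{P\text{ with }M^*_T} g(P) \;-\; w(M^* \cap S),
\]
where $g(P) := w(P \cap M^*) - w(P \cap M_S)$ and the $w(M^* \cap S)$ term accounts for the $M^*$-in-$S$ edges that can be credited at most once across all local exchanges. By optimality of $M_S$, any component $P \subseteq S$ satisfies $g(P) \leq 0$, so $\sum_{P\text{ with }M^*_T} g(P) \geq w(M^*) - w(M_S) = f(S \cup T) - f(S)$; and $w(M^* \cap S) \leq f(S)$ since $M^* \cap S$ is a feasible $b$-matching in $S$. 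Combining gives $\sum_t \Delta(t) \geq (f(S \cup T) - f(S)) - f(S) = f(S \cup T) - 2 f(S)$, as required.

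The main obstacle is making the per-component charging precise so that no $M_S$-edge is removed twice and each $M^* \cap S$-edge is credited exactly once, especially for alternating cycles and for paths carrying multiple $M^*_T$-edges where a naive charge assigns the same $M_S$-edge as a conflict to two distinct $t$'s and yields only the weaker bound $f(S \cup T) - 3 f(S)$. The key structural input is the $b$-matching constraint $\deg_{M^*_T}(v) \leq b(v) - \deg_{M^* \cap S}(v)$, which at any $M_S$-saturated vertex (so $\deg_{M_S}(v) = b(v)$) provides enough distinct $M_S$-edges to assign one to each incident $M^*_T$-edge. Combined with a careful decomposition of the multigraph $M_S \cup M^*$ (of maximum degree $2 b(v)$ at each $v$) into alternating walks, this is what makes the per-walk charging well-defined and delivers the desired aggregate bound.
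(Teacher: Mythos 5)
Your intermediate claim
\[
  \sum_{t \in M^*_T} \Delta(t) \;\geq\; \sum_{P\text{ with }M^*_T} g(P) \;-\; w(M^* \cap S)
\]
is false. Take a path of length three with edges $e_1, e_2, e_3$ in order, $b \equiv 1$, $S = \{e_2\}$, $T = \{e_1, e_3\}$, weights $w(e_1) = w(e_3) = 3$, $w(e_2) = 1$. Then $M_S = \{e_2\}$, $f(S) = 1$, $M^* = \{e_1, e_3\}$, $M^*_T = \{e_1, e_3\}$, $M^* \cap S = \emptyset$, and $M_S \triangle M^*$ is a single alternating path $P$ with $g(P) = 5$. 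Your inequality asserts $\Delta(e_1) + \Delta(e_3) \geq 5$, but $\Delta(e_1) = \Delta(e_3) = 2$, so the sum is $4$. The failure is structural, not a matter of care in the decomposition: the middle $M_S$-edge $e_2$ blocks \emph{both} $e_1$ and $e_3$, so each of $\Delta(e_1)$ and $\Delta(e_3)$ must individually pay $w(e_2)$. In any partition of $P$ into sub-walks with one $M^*_T$-edge each, $e_2$ lands in exactly one sub-walk, and the local gain of the other sub-walk overestimates the corresponding $\Delta$. This double charge of interior $M_S$-edges is not an artefact to be engineered away -- it \emph{is} the factor $2$, and trying to suppress it is exactly why the intermediate bound overshoots.

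The irony is that the ``coarse construction'' you set aside as giving only $f(S \cup T) - 3 f(S)$ is the right move, and it already yields $f(S \cup T) - 2 f(S)$ if you tighten two estimates. First, the conflict edges $e_u(t), e_v(t)$ can always be chosen from $M_S \setminus M^*$: at an $M_S$-saturated $v$ one has, by the degree inequality you yourself isolated, $\deg_{M^* \setminus M_S}(v) \leq b(v) - \deg_{M^* \cap M_S}(v) = \deg_{M_S \setminus M^*}(v)$, so there are enough distinct $M_S \setminus M^*$ edges at $v$ to serve as conflicts for all incident $M^* \setminus M_S$ edges; hence the total conflict weight is at most $2 w(M_S \setminus M^*)$, not $2 w(M_S)$. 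Second, sum the coarse per-edge gain $w(e) - w(e_u(e)) - w(e_v(e))$ over \emph{all} of $M^* \setminus M_S$, not just over $M^*_T$: for $e \in (M^* \cap S) \setminus M_S$ this gain is $\leq 0$ by optimality of $M_S$, so restricting to $M^*_T$ at the end only increases the sum, while carrying the full sum in the algebra gives $w(M^* \setminus M_S) - 2 w(M_S \setminus M^*) = w(M^*) - w(M_S) - w(M_S \setminus M^*) \geq f(S \cup T) - 2 f(S)$. This is exactly the paper's proof, with $l_M(v)$ in the role of your $w(e_v(t))$; the alternating-walk refinement is both unsound (as shown) and unnecessary.
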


\begin{proof}
   Let~$G=(V,E)$ be a graph, let~$w\colon E \to \mathbb{R}_{\geq 0}$ be edge weights, let~$b\colon V \to \mathbb{N}$ be vertex capacities, and let $f \colon 2^E \to \mathbb{N}$ such that~$f(S)$ denotes the maximum weight of a $b$-matching in the subgraph~$(V,S)$ of~$G$.
   Consider two edge sets $S,T\subseteq E$ with~$S\neq T$ and let~$M_S\subseteq S$ be a maximum weight $b$-matching in the graph~$(V,S)$, i.e., no vertex~$v\in V$ is incident to more than~$b(v)$ edges in~$M_S$ and~$w(M_S) = f(S)$, where~$w(X) := \sum_{e\in X} w(e)$.
   We let $d_M\colon V \to \mathbb{N}$ denote the vertex degrees of the subgraph~$(V,M_S)$, and define $l_M(v)$ to be the weight of the $b$-matching~$M_S$ that we lose if we need to reduce~$b(v)$ by one, i.e., for every~$v\in V$ we define
   \[ l_M(v) :=
   \begin{cases}
      0, & \text{if }d_M(v) < b(v), \\
      \min_{e=\{v,w\}\in M} w(e), & \text{if }d_M(v) = b(v).
   \end{cases} 
   \]
   Now take any edge~$e=\{u,v\}$ of the maximum weight $b$-matching~$M_{S\cup T}$ in the graph~$(V,S \cup T)$, and assume that we need to add~$e$ to~$M_S$ without violating vertex capacities, i.e., we may first need to remove edges from~$M_S$ to make room for~$e$.
   If~$e$ is already part of~$M_S$, we do not need to change the matching, and, in particular, the weight of the matching remains unchanged.
   Otherwise, by definition of~$l_M$ we can ensure that the change in weight of the $b$-matching is at least
   \[ w(e) - l_M(u) - l_M(v). \]
   If we sum this change over all edges in~$M_{S \cup T} \setminus M_S$, and let $b'(v)$ denote the degree of~$v$ in the subgraph~$(V, M_{S \cup T}\setminus M_S)$, we obtain
   \begin{align*} 
       & w(M_{S \cup T}\setminus M_S) - \sum_{v\in V} b'(v)l_M(v) \\
       & \geq w(M_{S \cup T}\setminus M_S) - 2w(M_S \setminus M_{S \cup T}) \\
       & = w(M_{S \cup T}) - w(M_{S \cup T} \cap M_S) - 2w(M_S \setminus M_{S \cup T}) \\
       & = w(M_{S \cup T}) - w(M_S) - w(M_S \setminus M_{S \cup T}) \\
       & \geq w(M_{S \cup T}) - 2w(M_S) \\
       & = f(S \cup T) - 2f(S).
   \end{align*}
   Since~$M_S$ is a maximum matching in~$(V,S)$, no edge in~$S$ can have a positive contribution to this sum.
   If this expression is still positive, there must be an edge~$e \in M_{S \cup T}\setminus S$ that increases the weight of~$M_S$ by at least 
   \[ (f(S \cup T) - 2f(S)) / |M_{S \cup T}\setminus S| \geq (f(S \cup T) - 2f(S)) / |T \setminus S|, \]
   and we get
   \[ f(S \cup \{e\}) - f(S) \geq (f(S \cup T) - 2f(S)) / |T \setminus S|,\]
   as claimed.
   Otherwise, the right-hand side is not positive, and the inequality is trivially satisfied by monotonicity of~$f$.
\end{proof}

\begin{lemma}
 The objective function of \noun{Maximum Matching} is not submodular.
\end{lemma}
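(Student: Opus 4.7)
The plan is to exhibit an explicit counterexample to submodularity. Recall that submodularity of~$f$ requires that for every $S\subseteq T\subseteq\solelements$ and every $e\in\solelements\setminus T$, we have $f(S\cup\{e\})-f(S)\geq f(T\cup\{e\})-f(T)$, i.e., marginal gains are non-increasing along chains. To refute this for the \noun{Maximum Matching} objective, I will display a single small graph with sets $S\subsetneq T$ and an edge~$e$ such that adding~$e$ to~$S$ contributes nothing, but adding~$e$ to the larger set~$T$ strictly increases the maximum matching size.

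First, I would take $G=(V,E)$ to be a path on four vertices $a,b,c,d$ with edges $e_1=\{a,b\}$, $e_2=\{b,c\}$, $e_3=\{c,d\}$. Here $f(X)$ denotes the size of a maximum matching in $(V,X)$. I would then set $S=\{e_2\}$, $T=\{e_1,e_2\}$, and choose the candidate edge $e=e_3$, so that $S\subsetneq T$ and $e\notin T$.

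Next I would compute all four matching values directly: $f(S)=1$ trivially; $f(T)=1$ because $e_1$ and $e_2$ share the vertex~$b$; $f(S\cup\{e\})=f(\{e_2,e_3\})=1$ because $e_2$ and $e_3$ share the vertex~$c$; and $f(T\cup\{e\})=f(\{e_1,e_2,e_3\})=2$, witnessed by the matching $\{e_1,e_3\}$. Therefore the marginal gain from adding~$e$ to~$S$ equals $0$, whereas the marginal gain from adding~$e$ to~$T$ equals $1$. This strictly violates the required inequality $f(S\cup\{e\})-f(S)\geq f(T\cup\{e\})-f(T)$, so the maximum matching objective cannot be submodular.

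There is no real obstacle here; the only subtlety is choosing a counterexample that is as small as possible so that the four matching values can simply be read off, and verifying that $e\notin T$ and $S\subseteq T$ are respected. The same three-edge path example simultaneously shows that the unweighted \noun{Maximum Matching} objective is already non-submodular, so the extensions to the weighted and $b$-matching variants inherit this property without further work.
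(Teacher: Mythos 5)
Your proof is correct and uses the same three-edge path counterexample as the paper; the only cosmetic difference is that you verify the diminishing-marginal-returns form of submodularity (with nested sets $S\subseteq T$ and an extra edge $e$) while the paper checks the equivalent lattice inequality $f(S)+f(T)\geq f(S\cup T)+f(S\cap T)$ with $S=\{e_1,e_2\}$, $T=\{e_2,e_3\}$.
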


\begin{proof}
   To see this, consider a path~$P=(V,E)$ of length tree with edges $e_1,e_2,e_3\in E$ in this order along the path.
   Recall that~$f(X)$ is the cardinality of a maximum matching in the subgraph~$(V,X)$ for every~$X\subseteq E$.
   With this, it is easy to check that for~$S:=\{e_1,e_2\}$ and~$T:=\{e_2,e_3\}$ we have
   \[ f(S) + f(T) = 1 + 1 < 2 + 1 = f(S \cup T) + f(S \cap T), \]
   thus~$f$ is not submodular.
\end{proof}

\begin{lemma}
 The objective function of \noun{Maximum Bridge Flow} is 2-\weaksub.
\end{lemma}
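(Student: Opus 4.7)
The plan is to adapt the matching-style swap argument to bridge flow, with the factor~$2$ emerging from separate accounting on the $U$-side and $W$-side of the cut, mirroring the two endpoints of an edge in the matching proof. First I would fix a maximum flow $F^{*}$ in the network $G|_{S\cup T}$ (obtained by deleting the cut edges in $C\setminus(S\cup T)$) and a maximum flow $F^{S}$ in $G|_{S}$, decomposed into path flows. Since $C$ has no backward edges, every $s$--$t$ path crosses $C$ in exactly one cut edge, so writing $\phi(e)$ (resp.\ $\psi(e)$) for the total flow through cut edge $e$ in $F^{*}$ (resp.\ $F^{S}$) gives $\sum_{e\in S\cup T}\phi(e)=f(S\cup T)$ and $\sum_{e\in S}\psi(e)=f(S)$. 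Setting $\Phi_{S}:=\sum_{e\in S}\phi(e)$ and $\Phi_{T}:=\sum_{e\in T\setminus S}\phi(e)$, the restriction of $F^{*}$ to its paths through $S$ is feasible in $G|_{S}$, so $\Phi_{S}\le f(S)$.

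For each $e\in T\setminus S$ I would construct a candidate flow $F^{e}$ on $G|_{S\cup\{e\}}$ by superimposing on $F^{S}$ the sub-flow $F^{*,e}$ of $F^{*}$ routed through $e$, which carries value $\phi(e)$ and uses only the single cut edge $e$. Capacity violations of $F^{S}+F^{*,e}$ can occur only on non-cut edges, and these naturally split into edges on the $U$-side (carrying flow from $s$ to the tails of cut edges) and on the $W$-side (carrying flow from heads of cut edges to $t$). Letting $\sigma_{U}(A)$ (resp.\ $\sigma_{W}(A)$) denote the maximum flow from $s$ to the tails of $A$ in the $U$-side subgraph (resp.\ from the heads of $A$ to $t$ in the $W$-side subgraph), both are monotone submodular, and the amounts that must be subtracted from $F^{S}+F^{*,e}$ to repair each polymatroid are
\[
    V_{U}^{e}=\max_{A\subseteq S}\bigl(\psi(A)+\phi(e)-\sigma_{U}(A\cup\{e\})\bigr)_{+}, \qquad V_{W}^{e}=\max_{A\subseteq S}\bigl(\psi(A)+\phi(e)-\sigma_{W}(A\cup\{e\})\bigr)_{+},
\]
yielding the per-element estimate $f(S\cup\{e\})-f(S)\ge \phi(e)-V_{U}^{e}-V_{W}^{e}$.

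The heart of the proof is the pair of aggregate bounds
\[
    \sum_{e\in T\setminus S}V_{U}^{e}\le f(S)-\Phi_{S} \qquad\text{and}\qquad \sum_{e\in T\setminus S}V_{W}^{e}\le f(S)-\Phi_{S},
\]
which play precisely the role of the matching inequality in which each element of $M_{S}\setminus M_{S\cup T}$ is charged at most twice, once per endpoint. To prove the $U$-side version I would first use feasibility of $F^{S}$ inside the polymatroid of $\sigma_{U}$ (so that $\psi(A)\le\sigma_{U}(A)$ for every $A\subseteq S$) together with submodularity of $\sigma_{U}$ to reduce $V_{U}^{e}$ to $(\phi(e)-\sigma_{U}(e\mid S))_{+}$, and then combine this with feasibility of $F^{*}$ in $\sigma_{U}$ (so that $\sum_{e\in A}\phi(e)\le\sigma_{U}(A)$ for every $A\subseteq S\cup T$) and a telescoping of the submodular marginals $\sigma_{U}(e\mid S)$ to absorb the sum into $f(S)-\Phi_{S}$; the $W$-side argument is identical after swapping tails for heads and $\sigma_{U}$ for $\sigma_{W}$.

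Summing the per-$e$ estimates and substituting $\Phi_{S}+\Phi_{T}=f(S\cup T)$ then produces
\[
    \sum_{e\in T\setminus S}\bigl(f(S\cup\{e\})-f(S)\bigr)\ge \Phi_{T}-2\bigl(f(S)-\Phi_{S}\bigr) = f(S\cup T)+\Phi_{S}-2f(S)\ge f(S\cup T)-2f(S),
\]
so pigeonholing over the at most $|T|$ summands on the left produces an $e\in T\setminus S$ satisfying the required 2-\weaksub\ inequality. The main obstacle I anticipate is the aggregate bound on $\sum_{e}V_{U}^{e}$: the naive per-$e$ estimate $V_{U}^{e}\le (\phi(e)-\sigma_{U}(e\mid S))_{+}$ can sum to strictly more than $f(S)-\Phi_{S}$, so one needs either a careful choice of $F^{*}$ (for instance one maximising $\Phi_{S}$ among all max flows for $S\cup T$) or a sharper polymatroid exchange argument that identifies the ``excess'' $f(S)-\Phi_{S}$ of $F^{S}$ over the $S$-demand of $F^{*}$ as exactly the total budget available to absorb the $U$-side infeasibilities.
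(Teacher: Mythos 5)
Your proposal takes a genuinely different route from the paper. Where the paper's proof works with residual graphs and flow decomposition, you try to mirror the matching proof by superimposing a fixed max flow $F^S$ with per-edge slices $F^{*,e}$ of a max flow $F^*$ for $S\cup T$, modelling the two ``halves'' of the bridge as polymatroids $\sigma_U,\sigma_W$ and charging repair costs separately to each side. The structural observations you make along the way are sound: every $s$--$t$ path crosses $C$ exactly once, $\Phi_S\le f(S)$, the $U$-side and $W$-side feasibility constraints are indeed polymatroidal, and the target inequality $\sum_{e\in T\setminus S}\bigl(f(S\cup\{e\})-f(S)\bigr)\ge f(S\cup T)-2f(S)$ is correct.

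However, there is a genuine gap, and you have identified it yourself: the two aggregate bounds $\sum_{e\in T\setminus S}V^e_U\le f(S)-\Phi_S$ and $\sum_{e\in T\setminus S}V^e_W\le f(S)-\Phi_S$ are asserted, not proven. You note that the naive per-$e$ estimate $V^e_U\le\bigl(\phi(e)-\sigma_U(e\mid S)\bigr)_+$ can overshoot, and you propose either a careful choice of $F^*$ or a sharper polymatroid exchange argument as a fix, but neither is carried out. This is the entire content of the lemma; without it the factor~$2$ does not emerge from your accounting. Moreover, even granting the per-$e$ bound, you would still need a telescoping argument on the marginals $\sigma_U(e\mid S)$ over all $e\in T\setminus S$ with a base set that changes as you go, and submodularity alone does not obviously make the pieces absorb into $f(S)-\Phi_S$.

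The paper avoids all of this by arguing in one pass with residual graphs. Fix a max flow $f_S$ for $G_S$, view it as a (non-maximum) flow in $G_{S\cup T}$, and let $f_r$ be a max flow in the residual graph $G^S_{S\cup T}$, so $\val(f_r)=f(S\cup T)-f(S)$. Decompose $f_r$ into paths. The paths that traverse a backward residual cut edge (which exist only for edges of $S$) collectively carry at most $f(S)$, because the total residual backward capacity across $C$ equals $\val(f_S)=f(S)$. The remaining sub-flow $f^{nb}_r$ therefore has value at least $f(S\cup T)-2f(S)$, and each of its paths crosses $C$ exactly once through some edge of $T\setminus S$ and is a legal augmenting path in $G^S_{S\cup\{e\}}$, giving $f(S\cup\{e\})-f(S)\ge f^{nb}_r(e)$. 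Pigeonholing over the at most $|T\setminus S|$ edges finishes the proof. This is both shorter and sidesteps the unproven aggregate bounds in your approach; I would recommend either adopting it or supplying a complete proof of the two $V$-bounds (e.g.\ via an explicit polymatroid exchange lemma) before your argument can be considered correct.
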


\begin{proof}
Recall that for any subset $S \subset C$, $f(S)$ is the 
value of the maximum flow using edges in $E \sm (C \sm S)$.
For any set $X \subseteq C$, let $G_X$ be the graph that contains all the edges of $X$, 
plus all the edges in $G$ that are not in $C$:
Let $f_S$ be some maximum flow in $G_S$, and let $\val(f_S)$ be its value.
By definition of the \noun{Maximum Bridge-Flow} objective function we have 
$\val(f_S) = \flow(S)$.

Now, $f_S$ can also be viewed as a flow in $G_{S \cup T}$. 
Let $G^S_{S \cup T}$ be the residual graph in $G_{S \cup T}$ formed by flow $f_S$. 
Let $f_r$ be the maximum flow in $G^S_{S \cup T}$ ($r$ for residual), and let $\val(f_r)$ be its value.

By the properties of residual graphs, we know that $\flow(S \cup T) = \flow(S) + \val(f_r)$. 
Rearranging we get that $\val(f_r) = \flow(S \cup T) - \flow(S)$. Now, by the property of flows,
the flow $f_r$ can be decomposed into source-sink paths and cycles, and we can define $f_r$ to be a max flow in  $G^S_{S \cup T}$
that contains only paths (no cycles), because such a max flow must always exist. 
There are two types of paths to consider in the decomposition of $f_r$: some use backwards edges in $S$,
while others only forward edges in $S \cup T$. Note that the total capacity of all backwards residual edges
in $G^S_{S \cup T}$ is $\val(f_S) = \flow(S)$. Thus, if we let $\fnbr$ be the sub-flow of $f_r$ that
uses no backwards edges (nb for no backwards edges), and we let $\val(\fnbr)$ be its value, then we have 
\begin{equation}
\label{eq:no-backwards-edges}
\val(\fnbr) \geq \val(f_r) - \flow(S) = \flow(S \cup T) - 2\flow(S). 
\end{equation}

Now, note that the flow $\fnbr$ is decomposed into paths which each cross the cut $C$ exactly once, because
by the problem definition the cut $C$ is directed one way, and $\fnbr$ contains no backwards edges with which to go back to the source side of the cut. 
Moreover, none of these paths use any edges in $S$; if such a path existed, then it would use no edges in $C \sm S$ 
(because it crosses the cut exactly once), so it could have been added to $f_s$ in $G_S$, 
which contradicts $f_S$ being the maximum flow in $G_S$. Thus, every flow-path in $\fnbr$ uses a single edge in $T \sm S$
and no other edges in $C$. Now, for any edge $e \in T$, let $\fnbr(e)$ be the flow on $e$ in $\fnbr$. 
We now argue that $\flow(S \cup \set{e}) - \flow(S) \geq \fnbr(e)$. To see this, 
let $G^S_{S \cup \set{e}}$ be the residual graph of 
$G_{S \cup \set{e}}$ defined by flow $f_S$. 
By the properties of residual graphs we have that $\flow(S \cup \set{e}) - \flow(S)$ is precisely the value of the maximum flow in $G^S_{S \cup \set{e}}$.
But note that all the flow-paths in $\fnbr$ that go through edge $e$ do not go through any other edge in $T$, and so they are also valid flow-paths in $G^S_{S \cup \set{e}}$. Thus, the value of the maximum flow in $G^{S}_{S \cup \set{e}}$ is at least $\fnbr(e)$, as desired.

Since every flow-path in $\fnbr$ only goes through a single edge in $T$ we have $\val(\fnbr) = \sum_{e \in T}\fnbr(e)$.
Also, since~$f_S$ is a maximum flow in~$G_S$, we have $\fnbr(e)=0$ for $e \in S$, and thus~$\val(\fnbr) = \sum_{e \in T\setminus S}\fnbr(e)$.
Thus, there is some edge $e \in T$ with $\fnbr(e) \geq \val(\fnbr) / |T\setminus S|$, so by the argument in the paragraph above,
$\flow(S \cup \set{e}) - \flow(S) \geq \val(\fnbr) / |T \setminus S|$. Eq.~\eqref{eq:no-backwards-edges} then completes the lemma. 
\end{proof}

\begin{lemma}
 The objective function of \noun{Maximum Bridge Flow} is not submodular.
\end{lemma}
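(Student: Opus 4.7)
My plan is to exhibit a small explicit counterexample: a graph on six vertices in which two different pairs of cut edges are each bottlenecked by a single non-cut edge (one on the source side, one on the sink side), yet a single additional cut edge unlocks two disjoint $s$-$t$-paths.

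Concretely, I would take $V = \{s, a, b, c, d, t\}$ with source side $U = \{s, a, b\}$ and sink side $W = \{c, d, t\}$. The non-cut edges are $(s,a)$, $(s,b)$, $(c,t)$, $(d,t)$, each of capacity~$1$, and the cut $C$ consists of the four edges $e_1 = (a,c)$, $e_2 = (a,d)$, $e_3 = (b,c)$, $e_4 = (b,d)$, each of capacity~$1$. This is a valid \noun{Maximum Bridge-Flow} instance, since every cut edge runs from $U$ to $W$ and there are no edges from $W$ to $U$.

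Next I would compute $f$ on the four subsets needed to violate submodularity. The single-edge value is $f(\{e_1\}) = 1$ via the path $s \to a \to c \to t$. The pair $\{e_1, e_2\}$ keeps all flow traveling through vertex $a$, so it is bottlenecked by $(s,a)$ and $f(\{e_1, e_2\}) = 1$. Symmetrically, $\{e_1, e_3\}$ forces all flow into vertex $c$ and is bottlenecked by $(c,t)$, giving $f(\{e_1, e_3\}) = 1$. Finally, $\{e_1, e_2, e_3\}$ admits the two edge-disjoint paths $s \to a \to d \to t$ and $s \to b \to c \to t$, yielding $f(\{e_1, e_2, e_3\}) = 2$. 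Taking $S = \{e_1, e_2\}$ and $T = \{e_1, e_3\}$, so $S \cap T = \{e_1\}$ and $S \cup T = \{e_1, e_2, e_3\}$, we obtain
\[
 f(S) + f(T) = 1 + 1 = 2 \;<\; 3 = 2 + 1 = f(S \cup T) + f(S \cap T),
\]
which directly contradicts submodularity.

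The only real obstacle is choosing the right small instance: one that exhibits both a source-side bottleneck (on $\{e_1, e_2\}$) and a sink-side bottleneck (on $\{e_1, e_3\}$) that are \emph{simultaneously} dissolved by adding a third cut edge, so that both pairs have flow value $1$ while the union has value $2$. Once the graph is fixed, the four flow values are immediate by inspection, and no further work is required.
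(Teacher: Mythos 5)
Your proof is correct and takes essentially the same approach as the paper: exhibit a small explicit \noun{Maximum Bridge-Flow} instance in which $f(S)=f(T)=1$, $f(S\cap T)=1$, and $f(S\cup T)=2$, violating the submodularity inequality. The paper's counterexample uses a $4$-vertex graph with a $3$-edge cut, while yours uses a $6$-vertex bipartite-style construction, but the underlying mechanism (a source-side bottleneck for one pair, a sink-side bottleneck for the other, both dissolved in the union) is identical.
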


\begin{proof}
   Consider the graph $G=(V,E)$ in Figure~\ref{fig:bridge_flow}, where all arcs have capacity~$1$, and the directed cut is~$C=\{e_1,e_2,e_3\}$.
   Recall that the objective function of the \noun{Bridge-Flow} problem is defined such that~$f(X)$ is the value of a maximum flow in the graph~$(V,(E\setminus C)\cup X)$ for~$X\subseteq C$.
   With this, it is easy to check that for~$S:=\{e_1,e_2\}$ and~$T:=\{e_2,e_3\}$ we have
   \[ f(S) + f(T) = 1 + 1 < 2 + 1 = f(S \cup T) + f(S \cap T), \]
   thus~$f$ is not submodular.
   \begin{figure}
   \begin{centering}
   \begin{tikzpicture}[->]
   \tikzstyle{every node} = [circle, fill = black, minimum size = 5, inner sep = 0]
   \tikzset{above/.style = { label = {[label distance = 2]90:#1} } }
   \tikzset{below/.style = { label = {[label distance = 2]270:#1} } }
   \tikzset{>={Stealth[scale=1.2]}}
   \node[label = {[label distance = 2]180:$s$}] (s) at (0, 0) {};
   \node[label = {[label distance = 2]0:$t$}] (t) at (4.5, 0) {};
   \node (v1) at (1.5, 0) {};
   \node (v2) at (3, 0) {};
   \draw[thick] (v1) edge node[midway, fill=white] {$e_2$} (v2);
   \draw[thick] (s) edge[bend left = 45] node[midway, fill=white] {$e_1$} (v2);
   \draw[thick] (v1) edge[bend right = 45] node[midway, fill=white] {$e_3$} (t);
   \draw[thick] (s) edge (v1);
   \draw[thick] (v2) edge (t);
   \end{tikzpicture}
   \par\end{centering}
   \caption{Example of a non-submodular \noun{Maximum Bridge-Flow} instance with unit capacities. \label{fig:bridge_flow}}
   \end{figure}
\end{proof}

\subsection{Proofs of Section~\ref{sub:greedy_lower}}

\greedybridgeflow*

\begin{proof}
   We prove the lemma by induction on~$j$, starting with step~$j=1$.
   Choosing $(v^2_{k+1},v^3_{k+1}) \in C_k$ in the first step results in a possible $s$-$t$-flow of $(\frac{k}{k-1})^{2k}$ along the $s$-$t$-path in $E_{k,1}$. 
   By construction, selecting $(v^2_{k+i},v^3_{k+i}) \in C_k$ with $i=2,\dots,2k$ yields less $s$-$t$-flow, since these edges have lower capacity. 
   Picking edge $(v^2_{k+i},v^3_{k+i}) \in C_k$ with $i=1,\dots,k,3k+1,\dots,4k$ results in a flow value bounded by the sum of the incoming edge capacities at~$v^2_{k+i}$ or the outgoing edge capacities at~$v^3_{k+i}$.
   In either case, we get a flow of
   \ama
   1+\frac{1}{k}\sum_{i=1}^{2k} \left(\frac{k}{k-1}\right)^{i} &= 1+\frac{1}{k} \left(\frac{\left(\frac{k}{k-1}\right)^{2k+1}-1}{\frac{k}{k-1}-1}-1\right)\\ 
   &= 1+\frac{1}{k} \left((k-1)\left(\frac{k}{k-1}\right)^{2k+1}-(k-1)-1\right) = \left(\frac{k}{k-1}\right)^{2k}.
   \ema
   Thus, no other edge results in more $s$-$t$-flow than edge~$(v^2_{k+1},v^3_{k+1})$, and with suitable tie-breaking we can ensure that the \Greedy algorithm picks this edge first, as claimed.
   
   Now assume that \Greedy picked the edges $(v^2_{k+1},v^3_{k+1}), \dots, (v^2_{k+j-1},v^3_{k+j-1})$ before step~$j$.
   Then, in step~$j$, it can pick edge~$(v^2_{k+j},v^3_{k+j})$ to increase the $s$-$t$-flow value by $(\frac{k}{k-1})^{2k+1-j}$ along the~$s$-$t$-path in~$E_{k,j}$.
   This is again better than selecting an edge $(v^2_{k+i},v^3_{k+i}) \in C_k$ for $i=j+1,\dots,2k$, since these edges have lower capacity. 
   For the remaining edges, we need to account for the fact that the edges $(s,v^1_i)$ and $(v^4_i,t)$ for $i=1,\dots,j-1$ are already saturated. 
   Therefore, the gain in flow value if we add any edge $(v^2_{k+i},v^3_{k+i}) \in C_k$ with $i=1,\dots,k,3k+1,\dots,4k$ in step~$j$ is
   \ama
   1+\frac{1}{k}\sum_{i=1}^{2k+1-j} \left(\frac{k}{k-1}\right)^{i} &= 1+\frac{1}{k} \left(\frac{\left(\frac{k}{k-1}\right)^{2k+2-j}-1}{\frac{k}{k-1}-1}-1\right)\\ 
   &= 1+\frac{1}{k} \left((k-1)\left(\frac{k}{k-1}\right)^{2k+2-j}-(k-1)-1\right) = \left(\frac{k}{k-1}\right)^{2k+1-j}.
   \ema
   This is again not better than picking edge~$(v^2_{k+j},v^3_{k+j})$, and with suitable tie-breaking we can ensure that the \Greedy algorithm picks~$(v^2_{k+j},v^3_{k+j})$, as claimed.
\end{proof}
\fi

\end{document}